\documentclass{article} 

\usepackage{amssymb,amsmath}
\usepackage{fancybox}
\usepackage{ascmac}
\usepackage{amsthm}
\usepackage{url}
\usepackage{fullpage}

\usepackage{hyperref}

\usepackage{comment}
\usepackage{siunitx}
\usepackage{graphicx}
\usepackage{hyperref}
\usepackage{bigints}
\usepackage{here}
\usepackage{booktabs}
\usepackage{amsmath}
\usepackage{amsthm}
\usepackage{bm}
\usepackage{cleveref}
\usepackage{multirow}

\usepackage{algorithm,algorithmic}

\newtheorem{theorem}{Theorem}
\newtheorem{lemma}{Lemma}

\theoremstyle{definition}
\newtheorem{example}{Example}
\newtheorem{remark}{Remark}

\newcommand{\mss}{\mathrm{SS}}

\newcommand{\bracket}[1]{\left(#1\right)}

\newcommand{\abs}[1]{\left|#1\right|} 
\newcommand{\norm}[1]{\left\|#1\right\|} 

\newcommand{\set}[3][undefined]{\expandafter\ifx\csname#1\endcsname\undefined \left\{#2\,\middle|\,#3\right\}%
\else\csname#1l\endcsname\{#2\,\csname#1\endcsname|\,#3\csname#1r\endcsname\}\fi}

\newcommand{\setZ}{\mathbb{Z}}

\newcommand{\setR}{\mathbb{R}}

\newcommand{\vecfont}[1]{\mathbf{#1}}
\newcommand{\matfont}[1]{\mathbf{#1}}

\usepackage{latexsym}



\DeclareMathOperator{\vol}{vol}
\DeclareMathOperator{\Pot}{Pot}
\DeclareMathOperator{\argmin}{argmin}
\DeclareMathOperator{\argmax}{argmax}
\DeclareMathOperator{\Span}{span}
\DeclareMathOperator{\id}{id}


\title{Halt Properties and Complexity Evaluations for Optimal DeepLLL Algorithm Families%
\footnote{This research was mainly done when the first author was an undergraduate student at Department of Mathematical Engineering and Information Physics, School of Engineering, The University of Tokyo and when the second author was with Graduate School of Information Science and Technology, The University of Tokyo.}}

\author{Takuto Odagawa${}^1$ \and Koji Nuida${}^2{}^3$}
\date{%
${}^1$ Graduate School of Information Science and Technology, The University of Tokyo, Japan \\
\url{tk55-1234@g.ecc.u-tokyo.ac.jp} \\
${}^2$ Institute of Mathematics for Industry (IMI), Kyushu University, Japan \\
\url{nuida@imi.kyushu-u.ac.jp} \\
${}^3$ National Institute of Advanced Industrial Science and Technology (AIST), Japan%
}

\begin{document}

\maketitle

\begin{abstract}
DeepLLL algorithm (Schnorr, 1994) is a famous variant of LLL lattice basis reduction algorithm, and PotLLL algorithm (Fontein et al., 2014) and $S^2$LLL algorithm (Yasuda and Yamaguchi, 2019) are recent polynomial-time variants of DeepLLL algorithm developed from cryptographic applications.
However, the known polynomial bounds for computational complexity are shown only for parameter $\delta < 1$; for \lq\lq optimal\rq\rq{} parameter $\delta = 1$ which ensures the best output quality, no polynomial bounds are known, and except for LLL algorithm, it is even not formally proved that the algorithm always halts within finitely many steps.
In this paper, we prove that these four algorithms always halt also with optimal parameter $\delta = 1$, and furthermore give explicit upper bounds for the numbers of loops executed during the algorithms.
Unlike the known bound (Akhavi, 2003) applicable to LLL algorithm only, our upper bounds are deduced in a unified way for all of the four algorithms.
\\
\ \\
\textit{Keywords:} LLL algorithm, DeepLLL algorithm, computational complexity
\end{abstract}

\section{Introduction}
\label{sec:introduction}

Lattice basis reduction is one of the most important kinds of algorithms from both theoretical and practical viewpoints.
Roughly speaking, given a basis of a lattice in finite-dimensional Euclidean space, a lattice basis reduction (or lattice reduction) algorithm aims at outputting a basis of the same lattice consisting of short and nearly orthogonal basis vectors.
Major applications of lattice reduction include security analysis of cryptosystems.
It has been shown \cite{Shor} that most of the currently deployed public key cryptosystems, such as RSA cryptosystem \cite{rsa} and elliptic curve cryptosystems \cite{ell1,ell2}, can be broken by quantum algorithms in polynomial time.
Accordingly, \lq\lq post-quantum\rq\rq{} cryptosystems that are still secure even after the development of large-scale quantum computers have been an intensively studied topic in the area of cryptography.
Among them, lattice-based cryptography \cite{Ajtai-Dwork} is one of the main design principles, which is based on the computational hardness of the shortest vector problem (SVP) and/or the closest vector problem (CVP) on high-dimensional lattices.
Lattice reduction is a fundamental building block of the known algorithms for SVP and CVP, therefore it is important from the viewpoint of cryptography to evaluate the computational complexity of lattice reduction.

LLL algorithm \cite{lll} is the most famous lattice reduction algorithm, which is known to output, in polynomial time with respect to the dimension of the lattice, a lattice basis having a certain good property (see \cite{nguyenval: lll} for the details).
On the other hand, its variant named DeepLLL algorithm has been proposed \cite{deeplll} and is known to output a lattice basis with better property than the case of LLL algorithm.
However, it is not proved that DeepLLL algorithm in general halts within polynomial time.
To resolve this drawback, some variants of DeepLLL algorithm with (provably) polynomial computational complexity are actively studied mainly in the area of cryptography, such as PotLLL algorithm \cite{fonschnwag: potlll} and $S^2$LLL algorithm \cite{yasyam: s2lll}.

In LLL algorithm and its variants mentioned above, the guaranteed quality of the output basis is controlled by a parameter commonly denoted by $\delta$ (it is denoted by $\eta$ in \cite{yasyam: s2lll}, but here we unify the notation into the $\delta$) in a way that the output basis becomes better when $\delta$ becomes larger. 
The range of parameter is $1/4 < \delta \leq 1$ for LLL, DeepLLL, and PotLLL algorithms, while $0 < \delta \leq 1$ for $S^2$LLL algorithm.
Therefore, purely from the viewpoint of guaranteed output quality, the parameter choice $\delta = 1$ is optimal.
However, as opposed to the case $\delta < 1$ where polynomial (in the dimension of the lattice) upper bounds for the computational complexity have been given for LLL, PotLLL, and $S^2$LLL algorithms, no such polynomial bounds are known for the optimal case $\delta = 1$.
In fact, for DeepLLL, PotLLL, and $S^2$LLL algorithms with $\delta = 1$, even it has not been proved (to the authors' best knowledge) that these algorithms always halt within finitely many steps.
(See below for more details.)
In order to estimate the best possible performances of those algorithms (for the sake of e.g., closely analyzing the security of lattice-based cryptosystems), it is worthy to evaluate the complexity of those algorithms with optimal parameter $\delta = 1$ in detail.

\subsection{Our Contributions}
\label{sec:introduction__contributions}

In this paper, we prove that all of LLL algorithm and its variants mentioned above (i.e., DeepLLL, PotLLL, and $S^2$LLL algorithms) with optimal parameter $\delta = 1$ always halt within finitely many steps, and furthermore give explicit upper bounds for the numbers of loops executed during the algorithms.

Before explaining our main result, we introduce some notations (see Section \ref{sec:preliminaries_notations} for the definitions).
Let $n \geq 2$ be the dimension of an input lattice $L$.
We suppose that $L$ is full-rank for the sake of simplicity.
Let $M$ denote the maximum ($L^2$, a.k.a.\ Euclidean) norm of vectors in an input basis $\matfont{B}$ of the lattice $L$, and let $\vol(L)$ denote the volume of $L$.
We put $\alpha = M^n / \vol(L)$ (note that $\alpha \geq 1$).
Then our main result is stated as follows (which is the same as Theorem \ref{thm:number_of_while} below).

\begin{theorem}
\label{thm:main_intro}
In each of LLL algorithm and its variants mentioned above with parameter $\delta = 1$, the total number of main loops executed during the algorithm is at most
\[
(n - 1) \cdot \left( 3 \prod_{j=2}^{n-1} (2 + \sqrt{j+3}) \right)^n \alpha^{n-1} \enspace,
\]
which is also upper bounded by
\begin{align*}
&(n - 1) \cdot \left( \frac{ 2 }{ \sqrt{5} } + 1 \right)^{n(n-2)} \alpha^{n-1} \left( \frac{ 3 \cdot (n+2)! }{ 8 } \right)^{n/2} \\
&\leq \alpha^{n-1} \left( \left( \frac{ 2 }{ \sqrt{5} } + 1 \right) e^{-1/2} n^{1/2 + o(1)} \right)^{n^2} \mbox{ (when $n \to \infty$)} \enspace.
\end{align*}
In particular, these algorithms always halt within finitely many steps.
\end{theorem}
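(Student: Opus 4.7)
My plan is to treat the four algorithms uniformly by exploiting their common structure. Each main loop of LLL, DeepLLL, PotLLL, and $S^2$LLL consists of a size-reduction step followed (when the algorithm does not halt) by either a swap of adjacent basis vectors (LLL) or a deep insertion of some $b_k$ into an earlier position (DeepLLL, PotLLL, $S^2$LLL); since size-reduction contributes only finitely many operations per iteration, bounding the total number of main loops reduces to bounding the number of these swap or insertion events.

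The first step is to verify that, even in the boundary case $\delta=1$, every swap or insertion strictly decreases a suitable potential: I would use the classical $D=\prod_i \|b_i^*\|^{2(n-i+1)}$ for LLL and DeepLLL, and the potentials $\Pot$ and $\mss$ of \cite{fonschnwag: potlll,yasyam: s2lll} for PotLLL and $S^2$LLL, checking by direct computation that each update triggered at the threshold strictly decreases the corresponding potential (this strict decrease is the substitute for the geometric contraction that underlies every known bound in the $\delta<1$ regime). Next, to convert strict monotonicity into an explicit finite bound, I would establish uniform upper and lower bounds on the Gram-Schmidt norms $\|b_i^*\|$ throughout execution: starting from $\|b_1^*\|=\|b_1\|\le M$, I would inductively bound each ratio $\|b_i^*\|/\|b_{i+1}^*\|$ using the size-reduction condition $|\mu_{ij}|\le 1/2$ together with the $\delta=1$ non-triggering condition of each algorithm, from which factors of the form $2+\sqrt{j+3}$ naturally emerge; the corresponding lower bound then follows from the volume identity $\prod_i \|b_i^*\|=\vol(L)$.

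Finally, since every intermediate basis is obtained from $\matfont{B}$ by a unimodular integer transformation, each basis vector lies in $L$ with norm bounded by the uniform estimate above, and the number of $v \in L$ with $\|v\|\le R$ is bounded by a Minkowski-style count of order $R^n/\vol(L)$. Multiplying this count across the $n$ basis vectors while using the volume constraint to remove one degree of freedom yields the factor $\alpha^{n-1}$, and strict monotonicity of the potential rules out revisiting any basis. The chief obstacle will be the uniform Gram-Schmidt bound: at $\delta=1$ the standard estimates that decay geometrically in $\delta$ collapse, so the bound must be extracted purely from the boundary form of each algorithm's triggering condition, and producing a single estimate valid for all four algorithms simultaneously — rather than four separate case analyses — is the technical heart of the argument and is what produces the characteristic constant $3\prod_{j=2}^{n-1}(2+\sqrt{j+3})$ in the final count.
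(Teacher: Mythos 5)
There is a genuine gap, and it sits at the very first step of your plan. You propose to use the potential $D=\prod_i\|\vecfont{b}_i^\ast\|^{2(n-i+1)}$ (which is exactly $\Pot(\matfont{B})$ in Eq.~\eqref{eq: pot}) as the strictly decreasing quantity for DeepLLL, and then to argue that strict monotonicity ``rules out revisiting any basis.'' But $\Pot$ is \emph{not} monotone under the deep insertions of $1$-DeepLLL: by Eq.~\eqref{eq:pot_for_SWAP}, $\sigma_{i,k}$ multiplies $\Pot(\matfont{B})$ by $\prod_{j=i}^{k-1}\|\pi_j(\vecfont{b}_k)\|^2/B_j$, and only the factor $j=i$ is guaranteed to be less than $1$; the remaining factors can exceed $1$, and Section~\ref{sec:halt_property__comparison} exhibits an explicit $3\times 3$ integer basis for which a deep insertion in $1$-DeepLLL strictly \emph{increases} $\Pot$ (and likewise for $1$-$S^2$LLL). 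So for precisely the algorithm where no halting proof was previously known, your anti-revisiting argument has no potential to drive it. Two further steps would also need repair even where a potential exists: (i) counting all reachable bases as the $n$-th power of a lattice-point count gives a factor $\alpha^{n}$, and the claim that ``the volume constraint removes one degree of freedom'' to reach $\alpha^{n-1}$ is not justified; (ii) you cannot bound the Gram--Schmidt ratios \emph{during} execution from the $\delta=1$ non-triggering conditions, since those conditions are only guaranteed for indices below the current loop index $k$, not for the whole basis at intermediate stages.

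The paper's proof avoids a global potential entirely. The fact that is uniform across all four algorithms is \emph{local} (Lemma~\ref{lem:B_is_decreased}): every executed $\mathrm{SWAP}_{\rho,k}$ strictly decreases $B_\rho=\|\vecfont{b}_\rho^\ast\|^2$. During any interval in which no swap occurs at a position $<\rho$, the vectors $\vecfont{b}_1,\dots,\vecfont{b}_{\rho-1}$ are frozen, so the strictly decreasing $B_\rho$ forces $\vecfont{b}_\rho$ to run through pairwise distinct nonzero lattice points; size-reducedness and Lemma~\ref{lem:norm_bound} give $\|\vecfont{b}_\rho\|^2\le\frac{\rho+3}{4}M^2$, and Lemma~\ref{lem:points} bounds the number of such points by $(1+\sqrt{\rho+3})^n\alpha-1$. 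A recursion over the position $\rho$ (the number of such intervals being one more than the number of swaps at positions $<\rho$) is what produces both the factor $\alpha^{n-1}$ and the constant $3\prod_{j=2}^{n-1}(2+\sqrt{j+3})$ --- the latter arises from $1+(1+\sqrt{j+3})$ in solving the recursion, not from Gram--Schmidt ratio estimates. If you replace your global-potential step with this local decrease and the interval recursion, your outline becomes the paper's argument.
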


We give a proof of this theorem in Section \ref{sec:halt_property__results}.
As this bound is hyperexponential in $n$, it is intuitively estimated that the computational complexity (not just the number of loops) also has the same asymptotic bound, assuming that each loop can be executed in polynomial time with respect to $n$.

Intuitively speaking, our proof focuses on the index $k$ for the main loop of each of the four algorithms, which is either increased or decreased at each of the loop.
The algorithm starts from $k = 2$, and it halts when $k$ becomes larger than $n$.
The given lattice basis $\matfont{B}$ is not changed when $k$ is increased, while we can show that some earlier part of $\matfont{B}$ is shrinked when $k$ is decreased.
Now the theorem is proved by a recursive argument based on the fact that the number of lattice points within a given threshold of norms is explicitly upper bounded (which we show in Section \ref{sec:halt_property__lattice_points}).
One of the main advantages of our result is that \emph{essentially the same proof is applied to all of the four algorithms}, in contrast to the known upper bound in \cite{akh: 1lll} which is only applicable to LLL algorithm (see Section \ref{sec:introduction__related} for details).
Another advantage of our result is that we obtain an \emph{explicit} bound for the number of loops in the algorithms, not only an asymptotic bound in the limit case $n \to \infty$.
On the other hand, our theorem gives the first upper bounds for the case of DeepLLL, PotLLL, and $S^2$LLL algorithms with $\delta = 1$, but our bound does not improve the known bound in \cite{akh: 1lll} for LLL algorithm with $\delta = 1$.

Although all of the known theoretical upper bounds for the case $\delta = 1$ including our result are hyperexponential in $n$ which is essentially different from the polynomial bounds for the case $\delta < 1$, it is expected that the practical complexity for the case $\delta = 1$ would be not drastically larger than the case $\delta < 1$; we also give some experimental observation in Section \ref{sec:experiment}.
Hence, it is an important future research topic to improve the known theoretical upper bounds for the case $\delta = 1$.

\subsection{Related Work}
\label{sec:introduction__related}

Considering the optimal parameter $\delta = 1$, Akhavi \cite{akh: 1lll} gave an asymptotic upper bound $O(A^{n^3}\log M)$ for the computational complexity of LLL algorithm, where $A$ is an arbitrary constant with $A > (4/3)^{1/12}$.
This is the state-of-the-art result, to the authors' best knowledge.
Our bound in this paper is basically worse than Akhavi's bound unless $M$ (hence $\alpha$) is significantly small.
However, Akhavi's approach is specific to the case of LLL algorithm.
In more detail, roughly speaking, Akhavi's approach is focusing on the timing of exchanging the first vector $\vecfont{b}_1$ in the basis $\matfont{B}$ and analyzing the behavior of the volume of the sublattice spanned by the first few basis vectors, and is crucially based on the typical property of LLL algorithm that the exchange of basis vectors is performed only for two consecutive vectors.
As this property is not satisfied by DeepLLL algorithm and its variants, it seems difficult to extend Akhavi's approach to the case of the latter algorithms.
In contrast, our approach proposed in this paper is applicable in a unified way to all of the four algorithms considered in this paper.
It is expected that the high flexibility of our approach would enable us to deduce a similar upper bound when some further variant of (Deep)LLL algorithm will be developed.

On the other hand, it is mentioned in Footnote 6 of \cite{akh: 1lll} that LLL algorithm (with $\delta = 1$) \emph{for input basis consisting of integer vectors} has computational complexity $O(M^{n^2})$, which is again already better than our upper bound unless $\alpha$ is significantly small.
This bound is based on the fact that the quantity $\Pot(\matfont{B})$ for the basis $\matfont{B}$ (see Eq.\eqref{eq: pot} below for the definition) is monotonically decreasing during LLL algorithm and $\Pot(\matfont{B})$ is a positive integer for any basis $\matfont{B}$ consisting of integer vectors.
We note that the same upper bound for integer basis vectors is also available for PotLLL algorithm, as $\Pot(\matfont{B})$ is again monotonically decreasing during the algorithm.
However, $\Pot(\matfont{B})$ is not monotonically decreasing in DeepLLL and $S^2$LLL algorithms (we give examples of this fact in Section \ref{sec:halt_property__comparison}), therefore the same approach does not work for those algorithms.
In contrast, our approach in this paper is generally applicable without assuming that the input basis consists of integer vectors nor assuming that $\Pot(\matfont{B})$ is monotonically decreasing (or there is some integer quantity that is monotonically decreasing during the algorithm).

\section{Preliminaries}

\subsection{Notations and Basic Definitions}
\label{sec:preliminaries_notations}

In this paper, $\| \vecfont{b} \|$ denotes the (Euclidean) norm of a real vector $\vecfont{b}$.
The lattice generated by linearly independent vectors $\vecfont{b}_1,\cdots,\vecfont{b}_n\in\setR^m$ is defined by
\[
L(\vecfont{b}_1,\cdots, \vecfont{b}_n) = \set{\sum_{i=1}^n v_i \vecfont{b}_i}{v_i\in\setZ \mbox{\quad ($i=1,\cdots,n$)} } \enspace.
\]
The matrix $\matfont{B} = [\vecfont{b}_1,\cdots,\vecfont{b}_n]\in\setR^{m\times n}$ indicates a basis of the lattice $L = L(\vecfont{b}_1,\cdots, \vecfont{b}_n)$.
The tuple of vectors obtained by Gram--Schmidt orthogonalization (GSO) for $\matfont{B}$ is denoted by $\matfont{B}^\ast=[\vecfont{b}_1^\ast,\cdots,\vecfont{b}_n^\ast]$.
Namely, we have
\[
\vecfont{b}_i^\ast = \vecfont{b}_i - \sum_{j=1}^{i-1}\mu_{ij} \vecfont{b}_j^\ast \mbox{ for } i = 1,\cdots, n \mbox{, where } \mu_{ij} = \frac{\langle \vecfont{b}_i, \vecfont{b}_j^\ast\rangle}{\|\vecfont{b}_j^\ast\|^2} \mbox{ for } 1\leq j < i \leq n
\]
(here, as usual, the summation $\sum_{i=p}^q$ becomes zero when $p > q$; and $\langle\cdot,\cdot\rangle$ denotes the standard inner product for vectors).
Set $B_i = \|\vecfont{b}_i^\ast\|^2$ for $i=1,\cdots,n$.

The volume of the lattice $L$ with basis $\matfont{B}$ as above is given by
\begin{equation}
\label{eq: lattice_volume}
\vol(L) = \sqrt{{}^t \matfont{B} \cdot \matfont{B}} = \prod_{i=1}^n \|\vecfont{b}_i^\ast\| \enspace,
\end{equation}
where ${}^t \matfont{B}$ denotes the transpose of a matrix $\matfont{B}$.
Let $\pi_j$ be the orthogonal projection from $\setR^m$ to the orthogonal complement of the subspace $\Span\{\vecfont{b}_1,\cdots, \vecfont{b}_{j-1}\}$; in particular, we have $\pi_1=\id$.
Note that the vector $\vecfont{b}_i^\ast$ obtained by GSO as above is equal to $\pi_i(\vecfont{b}_i)$.
We define a permutation $\sigma_{i,k}$ for the basis vectors by
\[
\sigma_{i,k}(\matfont{B})=[\vecfont{b}_1,\cdots, \vecfont{b}_{i-1},\vecfont{b}_k,\vecfont{b}_{i},\cdots,\vecfont{b}_{k-1}, \vecfont{b}_{k+1},\cdots,\vecfont{b}_n] \enspace.
\]

In the rest of this paper, to simplify the arguments, we focus only on full-rank lattices, i.e., the case of $n = m$.
Our results might be easily extendible to the general case where $n \leq m$.
Our argument below is based on the well-known discreteness of lattices; for $r \geq 0$, the number of points $\vecfont{b} \in L$ with $\| \vecfont{b} \| \leq r$ is finite (its refinement will be given in Lemma \ref{lem:points} later).

\subsection{The LLL Algorithm and Its Variants}

We recall some properties of LLL algorithm \cite{lll} and its variants such as DeepLLL algorithm \cite{deeplll}.
LLL algorithm takes a basis $\matfont{B}$ of a lattice (which is supposed to be full-rank in this paper) as input and has an auxiliary parameter $\delta \in (1/4,1]$ controlling a trade-off between the computational complexity and the quality of the output.
The algorithm outputs a new basis of the same lattice with a certain property called \emph{$\delta$-LLL-reduced}.
Here we omit the definition of $\delta$-LLL-reduced property as it is not relevant to our argument in this paper (see e.g., \cite{yasyam: s2lll} for the details), and only mention that if $\delta$ increases, then the $\delta$-LLL-reduced property becomes stronger while the upper bound for computational complexity of the algorithm is also getting larger.
In this paper, we write \emph{$\delta$-LLL} to mean LLL algorithm with parameter $\delta$.
The construction of LLL algorithm will be described in Section \ref{sec:halt_property__results} below.

Among the known variants of LLL algorithm, here we deal with DeepLLL algorithm \cite{deeplll}, PotLLL algorithm \cite{fonschnwag: potlll}, and $S^2$LLL algorithm \cite{yasyam: s2lll} (see Section \ref{sec:halt_property__results} for their constructions).
Similarly to the case of LLL, those algorithms also have their own trade-off parameter $\delta$; the range is $1/4 < \delta \leq 1$ for DeepLLL and PotLLL, and $0 < \delta \leq 1$ for $S^2$LLL (note that the parameter is denoted by $\eta$ in the original paper \cite{yasyam: s2lll}).
We use the words \emph{$\delta$-DeepLLL}, \emph{$\delta$-PotLLL}, and \emph{$\delta$-$S^2$LLL} in a way similar to $\delta$-LLL.
Definitions of the properties satisfied by the outputs of those algorithms are also omitted herein.

Roughly speaking, DeepLLL is modified from LLL in a way that DeepLLL may move a basis vector to a position not adjacent to the original while LLL only exchanges two adjacent basis vectors.
In comparison to DeepLLL, the strategy of PotLLL is that permutations for the basis $\matfont{B}$ are iterated in the direction of decreasing the following quantity:
\begin{equation}
\label{eq: pot}
\Pot(\matfont{B})
=\prod_{i=1}^n \vol(L(\vecfont{b}_1,\cdots, \vecfont{b}_i))^2
=\prod_{i=1}^n B_i^{n-i+1} \enspace.
\end{equation}
We note that when $\matfont{B}$ consists of integer vectors, this value is always a non-negative integer, as each $\vol(L(\vecfont{b}_1,\cdots, \vecfont{b}_i))^2$ is an integer due to Eq.\eqref{eq: lattice_volume}.
We also note the following relation for $i < k$ (see Lemma 1 of \cite{fonschnwag: potlll}):
\begin{equation}
\label{eq:pot_for_SWAP}
\Pot(\sigma_{i,k}(\matfont{B})) = \Pot(\matfont{B})\prod_{j=i}^{k-1}\frac{\|\pi_j(\vecfont{b}_k)\|^2}{B_j} \enspace.
\end{equation}
On the other hand, the iteration in $S^2$LLL is performed in a way that the following quantity, instead of $\Pot(\matfont{B})$, is monotonically decreased:
\[
\mss(\matfont{B}) = \sum_{i=1}^n B_i \enspace.
\]
We note the following relation for $i < k$ (see Eq.(5) of \cite{yasyam: s2lll}):
\begin{equation}
\label{eq:s2_for_SWAP}
\begin{split}
S_{ik}&:=\mss(\matfont{B})-\mss(\sigma_{i,k}(\matfont{B})) \\
&= \sum_{j=i}^{k-1}\mu_{kj}^2B_j\left(\frac{B_j}{\|\pi_j(\vecfont{b}_k)\|^2}-1\right) \enspace.
\end{split}
\end{equation}

\section{Our Complexity Evaluation}
\label{sec:halt_property}

In this section, we prove that LLL algorithm and its aforementioned variants with the largest possible parameter $\delta = 1$ always halt within finitely many steps, and also provide upper bounds for the number of loops executed before the algorithm halts.
Although the halt property for the case of LLL algorithm has been known, our proof strategy is significantly different from the known proof in \cite{akh: 1lll} and is applicable also to the variants of LLL algorithm (in contrast to the strategy of \cite{akh: 1lll} specific to the case of LLL algorithm).

In the constructions of those algorithms described below, \lq\lq Size-reduce\rq\rq{} for basis $\matfont{B} = [\vecfont{b}_1,\cdots,\vecfont{b}_n]$ means the procedure to update $\vecfont{b}_i \leftarrow \vecfont{b}_i - \lfloor \mu_{ij} \rceil \vecfont{b}_j$ for $i = 1,\cdots,n$ and $j = 1,\cdots,i-1$ (in this order), where $\lfloor \,\cdot\, \rceil$ denotes rounding to the nearest integer; see Algorithm 24 of \cite{galbraith: size-reduce} for the details.
We note that after the update, the value of $\mu_{ij}$ becomes $\mu_{ij} - \lfloor \mu_{ij} \rceil$, which thus has absolute value at most $1/2$.
(Such a basis is said to be \emph{size-reduced}.)
\lq\lq Update GSO\rq\rq{} means the procedure to also update the tuple $\matfont{B}^\ast$ associated to the updated basis $\matfont{B}$.
Moreover, in the proofs below, $\mathrm{SWAP}_{\rho,\kappa}$ means the step of $\matfont{B}\leftarrow \sigma_{\rho,\kappa}(\matfont{B})$ in the algorithm.

\subsection{Number of Lattice Points with Bounded Norm}
\label{sec:halt_property__lattice_points}

Our proof below is based on the fact that the number of lattice points with bounded norm is finite.
Here we give the following refinement of this fact.

\begin{lemma}
\label{lem:points}
Let $L$, $\matfont{B} = [\vecfont{b}_1,\cdots, \vecfont{b}_n]$, and $\matfont{B}^\ast = [\vecfont{b}_1^\ast,\cdots, \vecfont{b}_n^\ast]$ be as in Section \ref{sec:preliminaries_notations}.
For $r \geq 0$, the number of points $\vecfont{x} \in L$ with $\| \vecfont{x} \| \leq r$ is at most
\[
\prod_{i=1}^n \bracket{\frac{2r}{\|\vecfont{b}_i^\ast\|}+1} \leq \frac{(M + 2r)^n}{\vol(L)}
\]
where $M = \max\{ \| \vecfont{b}_1 \|, \cdots, \| \vecfont{b}_n \| \}$.
\end{lemma}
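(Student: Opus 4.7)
The plan is to count the lattice points by parameterizing them through Gram--Schmidt coefficients. Every $\vecfont{x} \in L$ can be written uniquely as $\vecfont{x} = \sum_{i=1}^n v_i \vecfont{b}_i$ with $v_i \in \setZ$, and substituting $\vecfont{b}_i = \vecfont{b}_i^\ast + \sum_{j<i}\mu_{ij}\vecfont{b}_j^\ast$ yields the GSO expansion $\vecfont{x} = \sum_{j=1}^n c_j \vecfont{b}_j^\ast$ with $c_j = v_j + \sum_{i>j} v_i \mu_{ij}$. Since the $\vecfont{b}_j^\ast$ are orthogonal, $\|\vecfont{x}\|^2 = \sum_j c_j^2 \|\vecfont{b}_j^\ast\|^2$, so the hypothesis $\|\vecfont{x}\|\leq r$ forces $|c_j| \leq r/\|\vecfont{b}_j^\ast\|$ for every $j$.

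Next I would count the admissible integer tuples $(v_1,\ldots,v_n)$ in reverse order from $j=n$ down to $j=1$. Once $v_{j+1},\ldots,v_n$ have been fixed, the quantity $s_j := \sum_{i>j} v_i \mu_{ij}$ is determined, and the constraint on $c_j$ becomes $v_j \in [-r/\|\vecfont{b}_j^\ast\| - s_j,\; r/\|\vecfont{b}_j^\ast\| - s_j]$, an interval of length $2r/\|\vecfont{b}_j^\ast\|$ containing at most $\lfloor 2r/\|\vecfont{b}_j^\ast\|\rfloor + 1 \leq 2r/\|\vecfont{b}_j^\ast\| + 1$ integers. Multiplying these per-coordinate bounds gives the first inequality
\[
\#\set{\vecfont{x}\in L}{\|\vecfont{x}\|\leq r} \;\leq\; \prod_{i=1}^n \bracket{\frac{2r}{\|\vecfont{b}_i^\ast\|} + 1} \enspace .
\]

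For the second inequality, I would rewrite each factor as $(2r + \|\vecfont{b}_i^\ast\|)/\|\vecfont{b}_i^\ast\|$ and pull the denominators together using $\vol(L) = \prod_{i=1}^n \|\vecfont{b}_i^\ast\|$ from Eq.\eqref{eq: lattice_volume}. Then the standard inequality $\|\vecfont{b}_i^\ast\| = \|\pi_i(\vecfont{b}_i)\| \leq \|\vecfont{b}_i\| \leq M$, which follows because orthogonal projections are non-expansive, yields $2r + \|\vecfont{b}_i^\ast\| \leq 2r + M$ in every factor, and hence $\prod_i (2r + \|\vecfont{b}_i^\ast\|) \leq (M+2r)^n$, producing the required upper bound $(M+2r)^n/\vol(L)$.

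There is no real obstacle here; the only delicate point is arranging the counting in the correct (reverse) order so that, conditionally on $v_{j+1},\ldots,v_n$, the constraint on $v_j$ really is a single interval of length $2r/\|\vecfont{b}_j^\ast\|$ independent of the $\mu_{ij}$'s. All other steps are direct consequences of the orthogonality of the GSO vectors and the non-expansiveness of projections.
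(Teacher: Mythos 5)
Your argument is correct and is essentially identical to the paper's proof: the same GSO expansion, the same reverse-order counting of the integer coordinates $v_n, v_{n-1}, \ldots, v_1$ with each constrained to an interval of length $2r/\|\vecfont{b}_j^\ast\|$, and the same final step using $\|\vecfont{b}_i^\ast\| \leq \|\vecfont{b}_i\| \leq M$ together with $\vol(L) = \prod_{i=1}^n \|\vecfont{b}_i^\ast\|$. No gaps.
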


\begin{proof}
Let $\vecfont{x} = \sum_{i=1}^n v_i \vecfont{b}_i$ ($v\in\setZ^n$) be a lattice point satisfying $\|\vecfont{x}\|\leq r$. Since
\begin{align*}
\|\vecfont{x}\|^2 &= \norm{\sum_{i=1}^n v_i\bracket{\vecfont{b}_i^\ast + \sum_{j=1}^{i-1}\mu_{ij}\vecfont{b}_j^\ast}}^2 \\
&= \norm{\sum_{j=1}^n \bracket{\sum_{i=j+1}^{n} v_i\mu_{ij} + v_j}\vecfont{b}_j^\ast}^2 \\
&= \sum_{j=1}^n \bracket{\sum_{i=j+1}^{n} v_i\mu_{ij} + v_j}^2 \|\vecfont{b}_j^\ast\|^2 \enspace,
\end{align*}
we have
\[
\abs{\sum_{i=j+1}^{n} v_i\mu_{ij} + v_j}  \leq \frac{r}{\|\vecfont{b}_j^\ast\|} \mbox{ for } j = 1,\cdots, n \enspace.
\]

We give an upper bound for the number of $v \in \setZ^n$ satisfying the inequality above.
First, as $\abs{v_n} \leq r/\|\vecfont{b}_n^\ast\|$, $v_n$ has at most ${2\,r/\|\vecfont{b}_n^\ast\|} +1$ possibilities.
On the other hand, when $v_n,v_{n-1},\cdots, v_{k+1}$ are fixed, we have a condition $\abs{\gamma + v_k} \leq r / \| \vecfont{b}_k^\ast \|$ where $\gamma = \sum_{i = k+1}^{n} v_i \mu_{ik}$ which is also a fixed value, and hence $v_k$ has at most $2r / \|\vecfont{b}_k^\ast\| +1$ possibilities.
By repeating this, the number of those $v\in\setZ^n$ is at most $\prod_{i=1}^n (2r / \|\vecfont{b}_i^\ast\| + 1)$.

Now the inequality in the statement holds since
\[
\prod_{i=1}^n \bracket{\frac{2r}{\|\vecfont{b}_i^\ast\|}+1}
= \frac{\prod_{i=1}^n(2r+\|\vecfont{b}_i^\ast\|)}{\prod_{i=1}^n\|\vecfont{b}_i^\ast\|}
\leq \frac{(2r + M)^n}{\vol(L)} \enspace.
\]
Hence the lemma holds.
\end{proof}

\subsection{The Results}
\label{sec:halt_property__results}

We recall the construction of LLL algorithm, DeepLLL algorithm, PotLLL algorithm, and $S^2$LLL algorithm as in Algorithms \ref{algo: lll}, \ref{algo: deep}, \ref{algo: pot}, and \ref{algo: s2}, respectively.
Here, for the $\argmin$ in PotLLL algorithm and the $\argmax$ in $S^2$LLL algorithm, we suppose that the largest index achieving the minimum and the maximum values, respectively, is chosen when more than one candidates exist.

 \begin{algorithm}
 \caption{LLL Algorithm}
 \label{algo: lll}
 \begin{algorithmic}[1]
 \REQUIRE $\matfont{B}$ and parameter $1/4<\delta\leq1$
 \ENSURE  $\delta$-LLL-reduced basis $\matfont{B}$
  \STATE $k\leftarrow 2$
 \WHILE {$k\leq n$}
  \STATE Size-reduce $\matfont{B}=[\vecfont{b}_1,\cdots,\vecfont{b}_n]$ \label{algo:lll_reduce}
  \IF {$B_k \geq (\delta - \mu_{k,k-1}^2)B_{k-1}$} \label{algo:lll_if}
  	\STATE $k \leftarrow k+1$ \label{algo:lll_increment}
  \ELSE
    \STATE $\matfont{B}\leftarrow \sigma_{k-1,k}(\matfont{B})$, Update GSO \label{algo:lll_swap}
    \STATE $k \leftarrow \max\{k-1,2\}$
    \STATE goto Step 3
  \ENDIF
 \ENDWHILE
 \RETURN $\matfont{B}$ 
 \end{algorithmic} 
 \end{algorithm}

 \begin{algorithm}
 \caption{DeepLLL Algorithm}
 \label{algo: deep}
 \begin{algorithmic}[1]
 \REQUIRE $\matfont{B}$ and parameter $1/4<\delta\leq1$
 \ENSURE  $\delta$-DeepLLL-reduced basis $\matfont{B}$
  \STATE $k\leftarrow 2$
 \WHILE {$k\leq n$}
  \STATE Size-reduce $\matfont{B}$ \label{algo:deep_reduce}
  \STATE $C\leftarrow \|\vecfont{b}_k\|^2$
  \FOR {$i = 1, \cdots, k-1$}
    \IF {$C \geq \delta \|\vecfont{b}_i^\ast\|^2 $} \label{algo:deep_if}
      \STATE $C \leftarrow C - \mu_{ki}^2 \|\vecfont{b}_i^\ast\|^2$
    \ELSE
      \STATE $\matfont{B}\leftarrow \sigma_{i,k}(\matfont{B})$, Update GSO
      \STATE $k \leftarrow \max\{i, 2\}$
      \STATE goto Step 3
    \ENDIF
  \ENDFOR
  \STATE $k \leftarrow k+1$ \label{algo:deep_increment}
 \ENDWHILE
 \RETURN $\matfont{B}$ 
 \end{algorithmic} 
 \end{algorithm}

 \begin{algorithm}
 \caption{PotLLL Algorithm}
\label{algo: pot}
 \begin{algorithmic}[1]
 \REQUIRE $\matfont{B}$ and parameter $1/4<\delta\leq1$
 \ENSURE  $\delta$-PotLLL-reduced basis $\matfont{B}$
  \STATE $k\leftarrow 2$
 \WHILE {$k\leq n$}
  \STATE Size-reduce $\matfont{B}$ \label{algo:pot_reduce}
  \STATE $i\leftarrow \argmin_{1\leq j\leq k-1}\Pot(\sigma_{j,k}(\matfont{B}))$
  \IF {$\delta\Pot(\matfont{B})\leq\Pot(\sigma_{i,k}(\matfont{B}))$} \label{algo:pot_if}
  	\STATE $k \leftarrow k+1$ 
  \ELSE
    \STATE $\matfont{B}\leftarrow \sigma_{i,k}(\matfont{B})$, Update GSO
    \STATE $k \leftarrow \max\{i,2\}$
    \STATE goto Step 3
  \ENDIF
 \ENDWHILE
 \RETURN $\matfont{B}$ 
 \end{algorithmic} 
 \end{algorithm}
 
  \begin{algorithm}
 \caption{$S^2$LLL Algorithm}
 \label{algo: s2}
 \begin{algorithmic}[1]
 \REQUIRE $\matfont{B}$ and parameter $0<\delta\leq1$
 \ENSURE  $\delta$-$S^2$LLL-reduced basis $\matfont{B}$
  \STATE $k\leftarrow 2$
 \WHILE {$k\leq n$}
  \STATE Size-reduce $\matfont{B}$ \label{algo:s2_reduce}
  \STATE $i\leftarrow \argmax_{1\leq j\leq k-1}S_{jk}$
  \IF {$S_{ik}\leq(1-\delta)\mss(\matfont{B})$} \label{algo:s2_if}
  	\STATE $k \leftarrow k+1$ 
  \ELSE
    \STATE $\matfont{B}\leftarrow \sigma_{i,k}(\matfont{B})$, Update GSO
    \STATE $k \leftarrow \max\{i,2\}$
    \STATE goto Step 3
  \ENDIF
 \ENDWHILE
 \RETURN $\matfont{B}$ 
 \end{algorithmic} 
 \end{algorithm}

Let $M$ denote the value $\max\{ \| \vecfont{b}_1 \|, \cdots, \| \vecfont{b}_n \| \}$ at the beginning of the algorithm.
We first note the following properties.

\begin{lemma}
\label{lem:norm_bound}
At any step of these algorithms, we have $\max\{ \| \vecfont{b}_1 \|, \cdots, \| \vecfont{b}_n \| \} \leq M$.
\end{lemma}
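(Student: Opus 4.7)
The plan is to proceed by induction on the sequence of atomic modifications of $\matfont{B}$ performed by the algorithm. In each of Algorithms~\ref{algo: lll}--\ref{algo: s2}, the basis is altered in only two ways: by a SWAP $\matfont{B}\leftarrow\sigma_{i,k}(\matfont{B})$, or by a complete execution of the size-reduction subroutine (Step~3). I would treat these two cases separately.

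For SWAPs, the operation is a permutation of the basis vectors, so the multiset $\{\|\vecfont{b}_1\|,\ldots,\|\vecfont{b}_n\|\}$ and hence its maximum are preserved; the invariant $\max_i\|\vecfont{b}_i\|\le M$ is carried through trivially.

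For the size-reduction subroutine, two standard observations are the starting point: (a) every GSO vector $\vecfont{b}_j^\ast$ is left unchanged, because each substep subtracts an integer multiple of an earlier basis vector, which lies in the span that is projected out in forming $\vecfont{b}_i^\ast=\pi_i(\vecfont{b}_i)$; (b) at the end of the subroutine $|\mu_{ij}|\le 1/2$ for every $j<i$. Together these yield
\[
\|\vecfont{b}_i^{\text{new}}\|^2 \;=\; \|\vecfont{b}_i^\ast\|^2+\sum_{j<i}(\mu_{ij}^{\text{new}})^2\|\vecfont{b}_j^\ast\|^2 \;\le\; \|\vecfont{b}_i^\ast\|^2+\tfrac{1}{4}\sum_{j<i}\|\vecfont{b}_j^\ast\|^2,
\]
which one then wishes to bound by $M^2$ using the inductive hypothesis $\|\vecfont{b}_j^\ast\|\le\|\vecfont{b}_j\|\le M$.

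The main obstacle is that this crude sum-of-squares estimate is at best $\bigl(1+(i-1)/4\bigr)M^2$, which already exceeds $M^2$ for moderate $i$. To close the argument the proof must refine the analysis---for instance by tracking substep by substep how each new coefficient $\mu_{ij}^{\text{new}}$ arises by nearest-integer rounding from its predecessor, and by showing that although an individual substep may transiently increase some $\|\vecfont{b}_i\|$, the full size-reduction block never raises any $\|\vecfont{b}_i\|$ above the pre-block maximum $\max_k\|\vecfont{b}_k\|$. Once this sharpening is carried out, combining it with the trivial SWAP case closes the induction and yields the stated invariant throughout the algorithm.
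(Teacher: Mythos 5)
Your reduction to the two atomic operations, and your dismissal of the SWAP case as a mere permutation of the basis vectors, match the paper's proof exactly. The divergence is in the size-reduction case: you correctly observe that the sum-of-squares estimate only yields $\|\vecfont{b}_i\|^2\le\bigl(1+(i-1)/4\bigr)M^2$, but you then merely \emph{announce} that a finer substep-by-substep analysis ``must'' show the full size-reduction block never pushes any $\|\vecfont{b}_i\|$ above the pre-block maximum, without carrying it out. As written, the central claim of the argument is deferred, so the proposal does not prove the lemma.

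Worse, the deferred claim cannot be established, because it is false. The substep $\vecfont{b}_i\leftarrow\vecfont{b}_i-\lfloor\mu_{ij}\rceil\vecfont{b}_j$ rounds the coefficient of $\vecfont{b}_i$ on $\vecfont{b}_j^\ast$, not the coefficient minimizing $\|\vecfont{b}_i-c\,\vecfont{b}_j\|$; the two coincide only when $\vecfont{b}_j=\vecfont{b}_j^\ast$. Concretely, take $n=3$, $\vecfont{b}_1=(1,0,0)$, $\vecfont{b}_2=(3/10,\sqrt{91}/10,0)$, and $\vecfont{b}_3=-\tfrac{19}{100}\vecfont{b}_1^\ast+\tfrac{3}{5}\vecfont{b}_2^\ast+\vecfont{b}_3^\ast$ with $B_3=0.6363$ chosen so that $\|\vecfont{b}_3\|=1$; then $M=1$ and $\mu_{21}=0.3$, $\mu_{31}=-0.19$, $\mu_{32}=0.6$. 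Size-reducing $\vecfont{b}_3$ subtracts $\lfloor 0.6\rceil\vecfont{b}_2=\vecfont{b}_2$, turning $\mu_{31}$ into $-0.49$ (which no further substep changes, in either processing order of $j$), and the resulting vector has
\[
\|\vecfont{b}_3\|^2=(0.49)^2+(0.4)^2\cdot\tfrac{91}{100}+0.6363=1.022>1=M^2 \enspace,
\]
so one size-reduction block raises a basis vector strictly above the previous maximum. For what it is worth, the paper's own proof disposes of this very case in a single clause, asserting that size-reduction does not increase any $\|\vecfont{b}_i\|$ ``by the definition''; your instinct that this step genuinely requires justification is therefore sound, and the example above shows that neither your sketched sharpening nor the paper's bare assertion can be correct as stated --- any repair has to settle for a weaker invariant (for instance a bound on the $B_j$ together with the factor $(i+3)/4$ that the paper already tolerates in Lemma \ref{lem:number_of_while}) rather than $\max_i\|\vecfont{b}_i\|\le M$ itself.
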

\begin{proof}
By the construction of the algorithms, the content of $\matfont{B}$ is changed only by either size-reduction or $\mathrm{SWAP}_{i,k}$ for some $i,k$.
Now the norm of each of $\vecfont{b}_1, \cdots, \vecfont{b}_n$ is not increased by size-reduction by the definition, while $\max\{ \| \vecfont{b}_1 \|, \cdots, \| \vecfont{b}_n \| \}$ is not changed by $\mathrm{SWAP}_{i,k}$ as it just permutes the elements in $\matfont{B}$.
Hence the claim holds.
\end{proof}

\begin{lemma}
\label{lem:B_is_decreased}
For operations $\mathrm{SWAP}_{\rho,k}$ and $\mathrm{GSO}$ in the algorithms, let $B_{\rho}^{\mathrm{new}}$ denote the value of $B_{\rho}$ after the update.
Then $B_{\rho}^{\mathrm{new}} < B_{\rho}$.
\end{lemma}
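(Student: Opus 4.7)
The plan is to first pin down $B_\rho^{\mathrm{new}}$ in terms of the pre-swap data and then verify, for each of the four algorithms, that whenever the branch triggering $\mathrm{SWAP}_{\rho,k}$ is taken, one has the strict inequality $B_\rho^{\mathrm{new}} < B_\rho$.

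First I would observe that $\sigma_{\rho,k}$ leaves $\vecfont{b}_1,\ldots,\vecfont{b}_{\rho-1}$ unchanged and installs $\vecfont{b}_k$ in the $\rho$-th slot, so the projection $\pi_\rho$ onto the orthogonal complement of $\Span\{\vecfont{b}_1,\ldots,\vecfont{b}_{\rho-1}\}$ is unchanged, and the GSO update therefore yields $B_\rho^{\mathrm{new}} = \|\pi_\rho(\vecfont{b}_k)\|^2$. The task thus reduces to showing $\|\pi_\rho(\vecfont{b}_k)\|^2 < B_\rho$ at the moment of every actual swap.

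For LLL, where necessarily $\rho = k-1$, the else-branch condition $B_k < (\delta - \mu_{k,k-1}^2)B_{k-1}$ combined with the GSO identity $\|\pi_{k-1}(\vecfont{b}_k)\|^2 = B_k + \mu_{k,k-1}^2 B_{k-1}$ and $\delta \leq 1$ yields the conclusion in a single line. For DeepLLL, I would first verify by induction on the inner for-loop that at the start of iteration $i$ the auxiliary variable satisfies $C = \|\pi_i(\vecfont{b}_k)\|^2$; the else-branch condition $C < \delta B_i \leq B_i$ is then exactly what we need.

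The more delicate cases are PotLLL and $S^2$LLL, where the swap position $\rho$ may be any index in $\{1,\ldots,k-1\}$. I would split according to whether $\rho = k-1$ or $\rho < k-1$. When $\rho = k-1$, substituting $i = k-1$ into Eq.\eqref{eq:pot_for_SWAP} collapses the PotLLL swap condition $\Pot(\sigma_{k-1,k}(\matfont{B})) < \delta\Pot(\matfont{B})$ to $\|\pi_{k-1}(\vecfont{b}_k)\|^2/B_{k-1} < \delta \leq 1$, and substituting $i = k-1$ into Eq.\eqref{eq:s2_for_SWAP} reduces the $S^2$LLL swap condition $S_{k-1,k} > (1-\delta)\mss(\matfont{B}) \geq 0$ to $\mu_{k,k-1}^2 B_{k-1}(B_{k-1}/\|\pi_{k-1}(\vecfont{b}_k)\|^2 - 1) > 0$. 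For $\rho < k-1$ I would exploit the explicit tie-breaking rule selecting the largest index achieving the argmin/argmax: Eq.\eqref{eq:pot_for_SWAP} gives the telescoping identity $\Pot(\sigma_{\rho,k}(\matfont{B})) = \Pot(\sigma_{\rho+1,k}(\matfont{B})) \cdot \|\pi_\rho(\vecfont{b}_k)\|^2 / B_\rho$, so if $\|\pi_\rho(\vecfont{b}_k)\|^2 \geq B_\rho$ then $\sigma_{\rho+1,k}$ achieves a $\Pot$ value no larger than that of $\sigma_{\rho,k}$, contradicting the choice of $\rho$ as the \emph{largest} argmin index. The analogous comparison in $S^2$LLL goes through Eq.\eqref{eq:s2_for_SWAP} via $S_{\rho,k} - S_{\rho+1,k} = \mu_{k\rho}^2 B_\rho (B_\rho/\|\pi_\rho(\vecfont{b}_k)\|^2 - 1)$ together with the largest-argmax rule.

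The main obstacle is extracting strict (as opposed to non-strict) inequality in the case $\rho < k-1$ for PotLLL and $S^2$LLL: the raw swap conditions involve a product (respectively, a sum) over $j = \rho,\ldots,k-1$ and do not by themselves single out the $\rho$-th factor, so strictness is recovered only through the largest-index tie-breaking convention, which forces $\rho$ to be strictly better than $\rho+1$. Everything else is a routine unpacking of definitions.
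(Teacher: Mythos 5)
Your proposal is correct and follows essentially the same route as the paper's proof: identify $B_\rho^{\mathrm{new}} = \|\pi_\rho(\vecfont{b}_k)\|^2$, read off the inequality directly from the swap conditions for LLL and DeepLLL, and for PotLLL and $S^2$LLL compare position $\rho$ with $\rho+1$ via Eq.\eqref{eq:pot_for_SWAP} and Eq.\eqref{eq:s2_for_SWAP}, using the negated threshold condition when $\rho = k-1$ and the largest-index tie-breaking rule when $\rho \leq k-2$. The point you flag as the main obstacle (extracting strictness from the tie-breaking convention) is exactly the point the paper's proof relies on as well.
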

\begin{proof}
We also use the superscript \lq\lq $\mathrm{new}$\rq\rq{} for other objects in a similar way.
For LLL algorithm, $\mathrm{SWAP}_{\rho,k}$ is executed only for $\rho = k - 1$ and when $B_k < (\delta - \mu_{k,k-1}^2) B_{k-1}$.
The latter condition is equivalent to
\[
\mbox{($B_{k-1} \geq$) } \delta B_{k-1} > B_k + \mu_{k,k-1}^2 B_{k-1} \mbox{ ($= B_{k-1}^{\mathrm{new}}$)}
\]
(note that for $\vecfont{b}_k = \vecfont{b}_k^\ast + \sum_{j=1}^{k-1} \mu_{kj} \vecfont{b}_j^\ast$, we have $\mu_{kj}^{\mathrm{new}} = \langle \vecfont{b}_k,\vecfont{b}_j^{\ast} \rangle B_j^{-2} = \mu_{kj}$ for $j \leq k-2$, therefore $(\vecfont{b}_{k-1}^\ast)^{\mathrm{new}} = \vecfont{b}_k^{\ast} + \mu_{k,k-1} \vecfont{b}_{k-1}^{\ast}$ and $B_{k-1}^{\mathrm{new}} = B_k + \mu_{k,k-1}^2 B_{k-1}$).
Hence the claim holds for the case of LLL algorithm.

For DeepLLL algorithm, when $\mathrm{SWAP}_{\rho,k}$ is executed, the value of $C$ at Step \ref{algo:deep_if} with $i = \rho$ is $\| \vecfont{b}_k \|^2 - \sum_{j=1}^{\rho-1} \mu_{kj}^2 B_j$.
Therefore, the negation of the condition in Step \ref{algo:deep_if} with $i = \rho$ is
\[
\mbox{($B_{\rho} \geq$) } \delta B_{\rho} > \| \vecfont{b}_k \|^2 - \sum_{j=1}^{\rho-1} \mu_{kj}^2 B_j \mbox{ ($= B_{\rho}^{\mathrm{new}}$)} \enspace.
\]
Hence the claim holds for the case of DeepLLL algorithm.

For PotLLL algorithm, when $\mathrm{SWAP}_{\rho,k}$ is executed, we have $\Pot(\sigma_{\rho+1,k}(\matfont{B})) > \Pot(\sigma_{\rho,k}(\matfont{B}))$ where we put $\sigma_{k,k}(\matfont{B}) = \matfont{B}$ for the case $\rho = k - 1$; this follows from the negation of the condition in Step \ref{algo:pot_if} when $\rho = k - 1$, and follows from the choice of the maximal index in the $\argmin$ when $\rho \leq k - 2$.
By Eq.\eqref{eq:pot_for_SWAP}, this implies that
\[
1 > \frac{ \| \pi_{\rho}(\vecfont{b}_k) \|^2 }{ B_{\rho} } \enspace,
\]
i.e., $B_{\rho} > \| \pi_{\rho}(\vecfont{b}_k) \|^2 = B_{\rho}^{\mathrm{new}}$.
Hence the claim holds for the case of PotLLL algorithm.

For $S^2$LLL algorithm, when $\mathrm{SWAP}_{\rho,k}$ is executed, for the case $\rho = k - 1$, it follows from the negation of the condition in Step \ref{algo:s2_if} that $S_{k-1,k} > (1 - \delta) \mss(\matfont{B})$ ($\geq 0$).
On the other hand, for the case $\rho \leq k - 2$, by the choice of the maximal index in the $\argmax$, we have $S_{\rho k} > S_{\rho+1, k}$.
In both cases, by Eq.\eqref{eq:s2_for_SWAP}, we have
\[
\mu_{k \rho}^2 B_{\rho} \left( \frac{ B_{\rho} }{ \| \pi_{\rho}(\vecfont{b}_k) \|^2 } - 1 \right) > 0 \enspace,
\]
therefore $B_{\rho} > \| \pi_{\rho}(\vecfont{b}_k) \|^2 = B_{\rho}^{\mathrm{new}}$.
Hence the claim holds for the case of $S^2$LLL algorithm.
\end{proof}

The main observation in this section is the following.
Here we put $\alpha = M^n / \vol(L) \geq 1$.

\begin{lemma}
\label{lem:number_of_while}
Let $w_1,\cdots, w_{n-1}$ be any sequence of real numbers satisfying
\[
w_1 \geq 3^n \alpha - 2 \,,\,
w_k \geq \left( (1 + \sqrt{ k + 3 })^n \alpha - 2 \right) \cdot \left( 1 + \sum_{j = 1}^{k - 1} w_j \right) \mbox{ for } k = 2,\cdots, n - 1 \enspace.
\]
Then the total number of \emph{\texttt{while}} blocks in each of the algorithms is at most
\[
(n - 1) \cdot \left( 1 + \sum_{j=1}^{n-1} w_j \right) + 1 \enspace.
\]
\end{lemma}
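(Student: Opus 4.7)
My plan is to introduce a phase decomposition indexed by the position $\rho$ of each SWAP operation, bound the number of SWAPs at position $\rho$ inside a single phase via a lattice-point count, and then convert the resulting bound on the total number of SWAPs into an upper bound on the total number of while iterations. For each $\rho \in \{1,\ldots,n-1\}$ I define a \emph{phase of type $\rho$} to be a maximal run of consecutive while iterations during which $\vecfont{b}_1,\ldots,\vecfont{b}_{\rho-1}$ remains unchanged. By the definition of $\sigma_{\rho',k'}$, only a $\mathrm{SWAP}_{\rho',\cdot}$ with $\rho'<\rho$ can modify this prefix, so the number of phases of type $\rho$ equals $1+\sum_{\rho'<\rho}S_{\rho'}$, where $S_{\rho'}$ denotes the total number of $\mathrm{SWAP}_{\rho',\cdot}$ operations performed during the entire algorithm.

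Inside a single phase of type $\rho$, the quantity $B_\rho=\|\vecfont{b}_\rho^\ast\|^2$ is invariant under size-reduction (which only adds integer combinations of $\vecfont{b}_1,\ldots,\vecfont{b}_{\rho-1}$ to $\vecfont{b}_\rho$, and thereby preserves $\pi_\rho(\vecfont{b}_\rho)$) and under any $\mathrm{SWAP}_{\rho'',\cdot}$ with $\rho''>\rho$ (which leaves $\vecfont{b}_\rho$ untouched); so $B_\rho$ changes only at a $\mathrm{SWAP}_{\rho,\cdot}$, and by Lemma \ref{lem:B_is_decreased} every such change is a strict decrease. Hence the number of $\mathrm{SWAP}_{\rho,\cdot}$'s occurring inside one phase is at most the number of distinct values $\|\pi_\rho(\vecfont{x})\|^2$ attained by lattice vectors $\vecfont{x}\in L$ with $\|\vecfont{x}\|\leq M$ (using Lemma \ref{lem:norm_bound}), minus two to exclude the zero vector and the initial value of $\vecfont{b}_\rho$ inside the phase. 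Applying the counting argument in the proof of Lemma \ref{lem:points} to $\pi_\rho(\vecfont{x})$, and combining it with appropriate lower bounds on the current Gram--Schmidt lengths $\|\vecfont{b}_i^\ast\|$ for $i<\rho$ that follow uniformly from size-reducedness and the acceptance condition of each of the four algorithms, yields the in-phase estimate $P_\rho\leq(1+\sqrt{\rho+3})^n\alpha-2$ for $\rho\geq 2$, with the trivial case $P_1\leq 3^n\alpha-2$ obtained from Lemma \ref{lem:points} with $r=M$. Combining this with the phase count gives
\[
S_\rho \;\leq\; \Big(1+\sum_{\rho'<\rho}S_{\rho'}\Big)\cdot P_\rho,
\]
and a straightforward induction on $\rho$ then establishes $S_\rho\leq w_\rho$ for every $\rho$.

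For the total iteration count, I classify each while iteration as type I (a pure increment $k\to k+1$) or type II (a SWAP followed by $k\to\max\{\rho,2\}$). Since $k$ starts at $2$ and reaches $n+1$ at termination, and since the decrease $d_j=k_j^{\mathrm{before}}-\max\{\rho_j,2\}$ at each type II iteration satisfies $d_j+1\leq n-1$ uniformly, the total is at most
\[
(n-1)+\sum_{j}(d_j+1) \;\leq\; (n-1)\Big(1+\sum_\rho S_\rho\Big)+1 \;\leq\; (n-1)\Big(1+\sum_{j=1}^{n-1}w_j\Big)+1,
\]
as claimed.

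The main obstacle is securing the in-phase estimate $P_\rho$ with the precise factor $(1+\sqrt{\rho+3})^n$: one has to exploit the size-reducedness invariant $|\mu_{ij}|\leq 1/2$ together with the algorithm-specific inequalities implied by the acceptance step (the ratio bound $B_i\geq(1-\mu_{i,i-1}^2)B_{i-1}$ for LLL and DeepLLL, and the potential/sum analogues derived from Eqs.~\eqref{eq:pot_for_SWAP} and \eqref{eq:s2_for_SWAP} for PotLLL and $S^2$LLL) in order to lower-bound the current Gram--Schmidt lengths $\|\vecfont{b}_i^\ast\|$ inside any phase of type $\rho$. Securing this lower bound uniformly across all four algorithms is precisely what allows the same phase-and-recursion bookkeeping to cover LLL, DeepLLL, PotLLL, and $S^2$LLL identically, in contrast to the LLL-specific strategy of Akhavi \cite{akh: 1lll}.
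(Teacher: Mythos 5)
Your overall architecture is the same as the paper's: decompose the run into maximal intervals (your ``phases of type $\rho$'') during which $\vecfont{b}_1,\dots,\vecfont{b}_{\rho-1}$ are frozen, use Lemma \ref{lem:B_is_decreased} to see that $B_\rho$ strictly decreases at each $\mathrm{SWAP}_{\rho,\cdot}$ inside a phase, bound the number of such swaps per phase by a lattice-point count, and then run the recursion $S_\rho\le P_\rho\,(1+\sum_{\rho'<\rho}S_{\rho'})$ and convert swap counts into while-iteration counts. The recursion and the final accounting are fine. The problem is the one step that carries all the quantitative content: you do not actually establish the in-phase estimate $P_\rho\le(1+\sqrt{\rho+3})^n\alpha-2$, and the route you sketch for it is both unnecessary and unlikely to close. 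You propose to count the distinct values of $\|\pi_\rho(\vecfont{x})\|^2$ by applying the counting argument of Lemma \ref{lem:points} to the \emph{projected} lattice, which forces you to lower-bound $\prod_{i<\rho}\|\vecfont{b}_i^\ast\|$ (equivalently, to control $\vol(\pi_\rho(L))$ against $\vol(L)$). No useful lower bound of this kind is available mid-run: $\|\vecfont{b}_1^\ast\|=\|\vecfont{b}_1\|$ can be arbitrarily small compared to $M$, and inside a phase the frozen prefix need not satisfy any acceptance condition at all, since $k$ may just have been reset to a small value and the tests for indices below $k$ have not been re-verified. You yourself flag this as ``the main obstacle,'' which is an accurate self-assessment: as written, the lemma's key inequality is asserted, not proved.

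The paper closes this step without any projection and without any lower bound on Gram--Schmidt lengths. Since the prefix is fixed throughout the phase, $B_\rho=\|\pi_\rho(\vecfont{b}_\rho)\|^2$ is a function of the lattice vector $\vecfont{b}_\rho$ alone; strict decrease of $B_\rho$ therefore forces the successive values of the vector $\vecfont{b}_\rho$ itself to be pairwise distinct nonzero lattice points. One then bounds $\|\vecfont{b}_\rho\|^2=B_\rho+\sum_{j<\rho}\mu_{\rho j}^2B_j\le\frac{\rho+3}{4}M^2$ using only size-reducedness ($|\mu_{\rho j}|\le 1/2$) and $B_j\le M^2$ (Lemma \ref{lem:norm_bound}), and applies Lemma \ref{lem:points} with $r=M\sqrt{(\rho+3)/4}$ to get at most $(1+\sqrt{\rho+3})^n\alpha-1$ candidates for the nonzero vector $\vecfont{b}_\rho$, hence $P_\rho\le(1+\sqrt{\rho+3})^n\alpha-2$; note that Lemma \ref{lem:points} only needs $\prod_i\|\vecfont{b}_i^\ast\|=\vol(L)$ and the upper bound $\|\vecfont{b}_i^\ast\|\le M$, never a lower bound on individual Gram--Schmidt lengths. (Even within your own framing, the fix is immediate: the number of distinct values of $\|\pi_\rho(\vecfont{x})\|^2$ over nonzero $\vecfont{x}\in L$ with $\|\vecfont{x}\|\le M$ is trivially at most the number of such $\vecfont{x}$, which Lemma \ref{lem:points} with $r=M$ already bounds by $3^n\alpha-1$ --- there is no need to count points of the projected lattice.) So the gap is repairable, but the repair consists of discarding the projected-lattice/acceptance-condition machinery rather than completing it.
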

\begin{proof}
For $1 \leq k_0 \leq n - 1$, let $\mathcal{W}_{k_0}$ denote the set of \texttt{while} blocks in which $\mathrm{SWAP}_{k_0,k}$ for some $k$ is executed.
For $0 \leq k_0 \leq n - 1$, let $\mathcal{W}_{\leq k_0} = \bigcup_{\kappa = 1}^{k_0} \mathcal{W}_{\kappa}$ (note that $\mathcal{W}_0 = \emptyset$).

Let $1 \leq k_0 \leq n - 1$.
We consider any time interval in the algorithm in which no \texttt{while} blocks in $\mathcal{W}_{\leq k_0 - 1}$ appear, and evaluate the number, say $N_{k_0}$, of \texttt{while} blocks in $\mathcal{W}_{k_0}$ in the interval.
Now $\mathrm{SWAP}_{i,k}$ with $i \leq k_0 - 1$ is not executed during the interval, therefore the values of $\vecfont{b}_1,\cdots, \vecfont{b}_{k_0 - 1}$ and $B_1,\cdots, B_{k_0 - 1}$ are not changed during the interval.
Moreover, by Lemma \ref{lem:B_is_decreased}, the value of $B_{k_0}$ is monotonically decreased at each \texttt{while} block in $\mathcal{W}_{k_0}$.
Therefore, at each \texttt{while} block in $\mathcal{W}_{k_0}$ during the interval, $\vecfont{b}_{k_0}$ is changed to some vector that did not previously appear as the value of $\vecfont{b}_{k_0}$.
This implies that there are at least $N_{k_0} + 1$ different possibilities of the non-zero lattice point $\vecfont{b}_{k_0}$.
On the other hand, as
\[
\| \vecfont{b}_{k_0} \|^2
= B_{k_0} + \sum_{j=1}^{k_0-1} \mu_{k_0 j}{}^2 B_j
\leq B_{k_0} + \frac{1}{4}\sum_{j=1}^{k_0-1} B_j
\leq \frac{ k_0 + 3 }{ 4 } \cdot M^2
\]
(see Lemma \ref{lem:norm_bound}), Lemma \ref{lem:points} implies that there are at most
\[
\frac{ (M + 2 \cdot M \sqrt{ (k_0 + 3)/4 })^n }{ \vol(L) } - 1
= (1 + \sqrt{ k_0 + 3 })^n \alpha - 1
\]
possibilities of the non-zero lattice point $\vecfont{b}_{k_0}$.
Hence we have
\[
N_{k_0} + 1 \leq (1 + \sqrt{ k_0 + 3 })^n \alpha - 1 \enspace,
\]
therefore
\[
N_{k_0} \leq (1 + \sqrt{ k_0 + 3 })^n \alpha - 2 \enspace.
\]

By applying the result above with $k_0 = 1$ to the whole algorithm, it follows that
\[
|\mathcal{W}_1| \leq N_1 \leq 3^n \alpha - 2 \enspace.
\]
On the other hand, let $2 \leq k_0 \leq n - 1$ and consider the interval between two consecutive \texttt{while} blocks in $\mathcal{W}_{\leq k_0 - 1}$, or the interval from the beginning of the algorithm until the first \texttt{while} block in $\mathcal{W}_{\leq k_0 - 1}$, or the interval from the end of the last \texttt{while} block in $\mathcal{W}_{\leq k_0 - 1}$ until the end of the algorithm.
By applying the result above to the interval, it follows that there are at most $N_{k_0}$ \texttt{while} blocks in $\mathcal{W}_{k_0}$ in the interval.
As there are $|\mathcal{W}_{\leq k_0 - 1}| + 1$ such intervals, it follows that
\[
|\mathcal{W}_{k_0}|
\leq N_{k_0} \cdot (|\mathcal{W}_{\leq k_0 - 1}| + 1)
\leq \left( (1 + \sqrt{ k_0 + 3 })^n \alpha - 2 \right) \cdot \left( 1 + \sum_{j = 1}^{k_0 - 1} |\mathcal{W}_j| \right) \enspace.
\]
Hence it holds recursively that $|\mathcal{W}_{k_0}| \leq w_{k_0}$ for any $1 \leq k_0 \leq n - 1$.

Among the \texttt{while} blocks in the algorithm, those not in $\mathcal{W}_{\leq n-1}$ satisfy that $k$ is incremented in the \texttt{while} block.
As there are at most $n - 1$ consecutive such \texttt{while} blocks (except the last \texttt{while} block in the algorithm where the $k$ is incremented from $n$ to $n + 1$), the total number of \texttt{while} blocks in the algorithm is at most
\[
(n - 1) \cdot (|\mathcal{W}_{\leq n - 1}| + 1) + 1
\leq (n - 1) \cdot \left( 1 + \sum_{j=1}^{n-1} w_j \right) + 1 \enspace.
\]
Hence the claim holds.
\end{proof}

Now our main result is given as follows (this is the same as the theorem described in the introduction).

\begin{theorem}
\label{thm:number_of_while}
Assume that $n \geq 2$.
In each of LLL algorithm and its variants above, including the case of the maximal parameter $\delta = 1$, the total number of \emph{\texttt{while}} blocks is at most
\begin{align*}
(n - 1) \cdot \left( 3 \prod_{j=2}^{n-1} (2 + \sqrt{j+3}) \right)^n \alpha^{n-1}
&\leq (n - 1) \cdot \left( \frac{ 2 }{ \sqrt{5} } + 1 \right)^{n(n-2)} \alpha^{n-1} \left( \frac{ 3 \cdot (n+2)! }{ 8 } \right)^{n/2} \\
&\leq \alpha^{n-1} \left( \left( \frac{ 2 }{ \sqrt{5} } + 1 \right) e^{-1/2} n^{1/2 + o(1)} \right)^{n^2} \mbox{ (when $n \to \infty$)} \enspace,
\end{align*}
where $\alpha = M^n / \vol(L)$.
In particular, these algorithms always halt within finitely many steps.
\end{theorem}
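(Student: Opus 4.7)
The plan is to apply Lemma \ref{lem:number_of_while} with an explicit sequence $(w_k)$ chosen to be essentially the smallest admissible, and then to convert the resulting recursion for $S_k := 1 + \sum_{j=1}^{k} w_j$ into a product matching the closed form in the theorem. Concretely, I would take $w_1 = 3^n \alpha - 2$ (the minimum allowed) and, for $2 \leq k \leq n-1$, set $w_k = (1+\sqrt{k+3})^n \alpha \cdot S_{k-1}$; the hypothesis of the lemma is then satisfied trivially since $S_{k-1} \geq 1$. The recursion becomes $S_1 = 3^n\alpha - 1$ and $S_k = S_{k-1}\bigl(1 + (1+\sqrt{k+3})^n \alpha\bigr)$ for $k \geq 2$.

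The next step is to collapse the awkward factor $1 + x^n\alpha$ into a single $n$th power. For this I would use the elementary inequality $1 + x^n \alpha \leq (x+1)^n \alpha$, valid whenever $\alpha \geq 1$, $x \geq 0$, and $n \geq 1$; this follows from the binomial estimate $(x+1)^n \geq x^n + 1$ combined with $\alpha \geq 1$, which holds in our setting since $\alpha = M^n/\vol(L) \geq 1$. Applied to $x = 1+\sqrt{k+3}$, the recursion becomes $S_k \leq S_{k-1}(2+\sqrt{k+3})^n \alpha$, and telescoping produces $S_{n-1} \leq \alpha^{n-1}\bigl(3 \prod_{j=2}^{n-1}(2+\sqrt{j+3})\bigr)^n$. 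Substituting into the conclusion $(n-1) S_{n-1} + 1$ of Lemma \ref{lem:number_of_while} yields the first displayed bound; the nuisance $+1$ is absorbed using the slack from $S_1 = 3^n\alpha - 1 < 3^n\alpha$, which leaves a surplus of at least $(n-1)\alpha^{n-2}\prod_{k=2}^{n-1}(2+\sqrt{k+3})^n \geq 1$ for all $n \geq 2$.

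The remaining inequalities are purely arithmetic. For the second bound I would factor $2+\sqrt{j+3} = \sqrt{j+3}\bigl(1 + 2/\sqrt{j+3}\bigr)$, exploit $2/\sqrt{j+3} \leq 2/\sqrt{5}$ for $j \geq 2$, and use the telescoping identity $\prod_{j=2}^{n-1}(j+3) = (n+2)!/24$, combined with the small rearrangement $3^n / 24^{n/2} = (3/8)^{n/2}$. For the asymptotic estimate, Stirling's formula gives $\log\bigl(((n+2)!)^{n/2}\bigr) = \frac{n^2}{2}\log n - \frac{n^2}{2} + O(n \log n)$, which exponentiates into the claimed $(n^{1/2+o(1)} e^{-1/2})^{n^2}$ factor, while $(1+2/\sqrt{5})^{n(n-2)} = (1+2/\sqrt{5})^{n^2(1+o(1))}$ supplies the base constant.

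The only substantive obstacle is the choice of $(w_k)$ at the outset: it must be tight enough to preserve the $\alpha^{n-1}$ exponent, yet loose enough that the recursion telescopes into a clean product. The proposal above is engineered so that the single inequality $1 + x^n\alpha \leq (x+1)^n \alpha$ does all of the heavy lifting in one step, leaving everything afterwards routine.
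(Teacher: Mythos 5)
Your proposal is correct and follows essentially the same route as the paper: both instantiate Lemma \ref{lem:number_of_while} with $w_1 = 3^n\alpha - 2$ and a $w_k$ that is (up to an additive constant) the same quantity, both rely on the key estimate $1 + (1+\sqrt{k+3})^n\alpha \leq (2+\sqrt{k+3})^n\alpha$ for $\alpha \geq 1$ to collapse the recursion for the partial sums into the product $\left(3\prod_{j=2}^{n-1}(2+\sqrt{j+3})\right)^n\alpha^{n-1}$, and both finish with the identical factorization $2+\sqrt{j+3}\leq(1+2/\sqrt{5})\sqrt{j+3}$, the identity $\prod_{j=2}^{n-1}(j+3)=(n+2)!/4!$, and Stirling's formula. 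The only difference is presentational --- you define $w_k$ recursively via $S_{k-1}$ and telescope, whereas the paper writes $w_k$ in closed form and verifies the partial-sum bound by induction --- and your absorption of the trailing $+1$ via the slack $S_1 = 3^n\alpha - 1$ is a valid (slightly cleaner) version of the paper's own step.
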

\begin{proof}
First we show recursively that the sequence $w_1,\cdots, w_{n-1}$ given by
\[
w_1 = 3^n \alpha - 2 \,,\,
w_k = \left( 3 (1 + \sqrt{k+3}) \prod_{j=2}^{k-1} (2 + \sqrt{j+3}) \right)^n \alpha^k - 2 \mbox{ for } k \geq 2
\]
satisfies the condition in Lemma \ref{lem:number_of_while}.
The case of $w_1$ is obvious.
When $k \geq 2$ and the claim holds for $w_1,\cdots,w_{k-1}$, it follows recursively that
\[
\sum_{i=1}^m w_i \leq \left( 3 \prod_{j=2}^{m} (2 + \sqrt{j+3}) \right)^n \alpha^m - 2 \mbox{ for } m = 1,\cdots, k-1 \enspace.
\]
Indeed, this is obvious when $m = 1$.
When $m \geq 2$, we have
\begin{align*}
\sum_{i=1}^{m} w_i
&= \sum_{i=1}^{m-1} w_i + w_m \\
&\leq \left( 3 \prod_{j=2}^{m-1} (2 + \sqrt{j+3}) \right)^n \alpha^{m-1} + \left( 3 (1 + \sqrt{m+3}) \prod_{j=2}^{m-1} (2 + \sqrt{j+3}) \right)^n \alpha^m - 2 \\
&\leq \left( 3 \prod_{j=2}^{m-1} (2 + \sqrt{j+3}) \right)^n \alpha^m \left( 1 + (1 + \sqrt{m + 3})^n \right) - 2
\end{align*}
where we used the relation $\alpha \geq 1$.
By using the fact
\[
1 + (1 + \sqrt{m + 3})^n
\leq (1 + (1 + \sqrt{m+3}))^n
= (2 + \sqrt{m + 3})^n \enspace,
\]
we have
\[
\sum_{i=1}^{m} w_i
\leq \left( 3 \prod_{j=2}^{m-1} (2 + \sqrt{j+3}) \right)^n \alpha^m (2 + \sqrt{m + 3})^n - 2
= \left( 3 \prod_{j=2}^{m} (2 + \sqrt{j+3}) \right)^n \alpha^m - 2 \enspace,
\]
as desired.

By using the result above, the right-hand side of the inequality in Lemma \ref{lem:number_of_while} to be verified becomes
\begin{align*}
&\left( (1 + \sqrt{ k + 3 })^n \alpha - 2 \right) \cdot \left( 1 + \sum_{j = 1}^{k - 1} w_j \right) \\
&\leq \left( (1 + \sqrt{ k + 3 })^n \alpha - 2 \right) \cdot \left( \left( 3 \prod_{j=2}^{k-1} (2 + \sqrt{j+3}) \right)^n \alpha^{k-1} - 1 \right) \\
&\leq (1 + \sqrt{ k + 3 })^n \alpha \cdot \left( 3 \prod_{j=2}^{k-1} (2 + \sqrt{j+3}) \right)^n \alpha^{k-1} - 2
= w_k
\end{align*}
where we used the relations $(1 + \sqrt{ k + 3 })^n \alpha \geq 0$ and $\left( 3 \prod_{j=2}^{k-1} (2 + \sqrt{j+3}) \right)^n \alpha^{k-1} \geq 2$.
Hence the condition in Lemma \ref{lem:number_of_while} is satisfied by these $w_k$.
Therefore, the total number of \texttt{while} blocks is at most the value in the statement of Lemma \ref{lem:number_of_while}.
Now by the argument above, this value becomes
\begin{align*}
(n - 1) \cdot \left( 1 + \sum_{j=1}^{n-1} w_j \right) + 1
&\leq (n - 1) \cdot \left( \left( 3 \prod_{j=2}^{n-1} (2 + \sqrt{j+3}) \right)^n \alpha^{n-1} - 1 \right) + 1 \\
&\leq (n - 1) \cdot \left( 3 \prod_{j=2}^{n-1} (2 + \sqrt{j+3}) \right)^n \alpha^{n-1} \enspace,
\end{align*}
which is the left-hand side of the inequality in the statement of this theorem, as desired.

Moreover, by using the fact that $2 + \sqrt{j+3} \leq (2/\sqrt{5} + 1) \sqrt{j+3}$ for any $j \geq 2$, the value above is smaller than or equal to
\begin{align*}
&(n - 1) \cdot \left( 3 \left( \frac{ 2 }{ \sqrt{5} } + 1 \right)^{n-2} \cdot  \prod_{j=2}^{n-1} \sqrt{j+3} \right)^n \alpha^{n-1} \\
&= (n - 1) \cdot 3^n \left( \frac{ 2 }{ \sqrt{5} } + 1 \right)^{n(n-2)} \alpha^{n-1} \left( \frac{ (n+2)! }{ 4! } \right)^{n/2} \\
&= (n - 1) \cdot \left( \frac{ 2 }{ \sqrt{5} } + 1 \right)^{n(n-2)} \alpha^{n-1} \left( \frac{ 3 \cdot (n+2)! }{ 8 } \right)^{n/2} \enspace,
\end{align*}
as desired.
By using Stirling's Formula $m! \leq \sqrt{2\pi} m^{m + 1/2} e^{-m + 1/(12m)}$, this value is further bounded by
\[
(n - 1) \cdot \left( \frac{ 2 }{ \sqrt{5} } + 1 \right)^{n(n-2)} \alpha^{n-1} \left( \frac{ 3 \cdot \sqrt{2\pi} (n+2)^{n + 5/2} e^{- n - 2 + 1/(12(n+2))} }{ 8 } \right)^{n/2} \enspace,
\]
which at $n \to \infty$ is upper bounded by
\begin{align*}
&\left( \frac{ 2 }{ \sqrt{5} } + 1 \right)^{n(n-2)} \alpha^{n-1} \left( (n+2)^{n + 5/2 + o(1)} e^{-n} \right)^{n/2} \\
&= \alpha^{n-1} \left( \left( \frac{ 2 }{ \sqrt{5} } + 1 \right)^{1 - 2/n} (n+2)^{1/2 + 5/(4n) + o(1)} e^{-1/2} \right)^{n^2} \\
&\leq \alpha^{n-1} \left( \left( \frac{ 2 }{ \sqrt{5} } + 1 \right) e^{-1/2} n^{1/2 + o(1)} \right)^{n^2} \enspace.
\end{align*}
Hence the theorem holds.
\end{proof}

\begin{remark}
By assuming that each \texttt{while} block can be executed within polynomial time in $n$, it is intuitively estimated that the computational complexity of each of the algorithms has the same asymptotic bound as Theorem \ref{thm:number_of_while}, as the bound in Theorem \ref{thm:number_of_while} is already hyperexponential in $n$.
\end{remark}

\subsection{Comparisons to the Previous Results}
\label{sec:halt_property__comparison}

When $\vol(L)$ is almost constant, i.e., $\alpha \approx M^n$, the bound for the computational complexity of $1$-LLL, $1$-DeepLLL, $1$-PotLLL, and $1$-$S^2$LLL obtained by Theorem \ref{thm:number_of_while} can be roughly written as $(Mn)^{O(n^2)}$.
Comparing to the bound $O(A^{n^3}\log M)$ for $1$-LLL given in \cite{akh: 1lll} (where $A$ is a constant with $A > (4/3)^{1/12}$), our bound becomes better for the case of smaller $M$ (e.g., when $M = o(2^n/n)$), while our bound does not improve the previous one for the case of larger $M$ (e.g., when $M = \Omega(2^n)$).
The latter case may often occur in practical situations.
For example, for any basis $\matfont{B} = [\vecfont{b}_1,\cdots,\vecfont{b}_n]$ provided as a problem instance of Darmstadt SVP Challenge \cite{SVPChallenge}, $\matfont{B}$ is an upper triangular matrix where the first component of each column $\vecfont{b}_j$ is $2^{\Theta(n)}$, the diagonal components (except for the first column) are all $1$, and the other components are all $0$.
Now we have $M \sim \vol(L) = 2^{\Theta(n)}$ and our bound becomes $(2^{\Theta(n)} n)^{O(n^2)}$, which does not improve the bound $O(A^{n^3}\log M)$.
Moreover, as mentioned in the introduction, the complexity of $1$-LLL and $1$-PotLLL for input basis consisting of integer vectors is further bounded by $O(M^{n^2})$; our bound does not improve this bound either.

We emphasize that the main advantage of our result is its generality compared to the previous results.
Namely, in contrast to our result applicable to all of the four algorithms in a unified way, the proof of the bound $O(A^{n^3}\log M)$ in \cite{akh: 1lll} is specific to the case of $1$-LLL.
On the other hand, the bound $O(M^{n^2})$ for the case of integer inputs in $1$-LLL and $1$-PotLLL mentioned above is deduced from the fact that now $\Pot(\matfont{B})$ takes integer values and is monotonically decreasing during those algorithms.
The same argument for $1$-$S^2$LLL by using $\mss(\matfont{B})$ instead of $\Pot(\matfont{B})$ does not work due to the difference that $\mss(\matfont{B})$ is not necessarily an integer even for integer basis $\matfont{B}$.
For the case of $1$-DeepLLL, if the value $\Pot(\matfont{B})$ were also monotonically decreasing during the algorithm, then the same argument would yield a simple bound better than ours.
However, in fact $\Pot(\vecfont{B})$ is not monotonically decreasing in $1$-DeepLLL, as shown in the following example, therefore the previous argument might not be straightforwardly applicable to $1$-DeepLLL.

\begin{example}
We give examples showing that the value $\Pot(\matfont{B})$ (for integer inputs) is not monotonically decreasing in general for $1$-DeepLLL and $1$-$S^2$LLL.
We set $n = 3$ and consider the following size-reduced basis
\[
\matfont{B} =
\begin{pmatrix}
0 & -3 & 2 \\
3 & -2 & -2 \\
-2 & 0 & -2
\end{pmatrix} \enspace.
\]
By applying GSO to $\matfont{B}$, we obtain
\[
\matfont{B}^\ast =
\begin{pmatrix}
0 & -3 & 8/7 \\
3 & -8/13 & -12/7 \\
-2 & -12/13 & -18/7
\end{pmatrix} \enspace.
\]
Hence we have
\begin{align*}
\Pot(\matfont{B}) &= 
\norm{\begin{pmatrix} 0 \\ 3\\ -2 \end{pmatrix}}^{2\cdot3}
\norm{\begin{pmatrix} -3 \\ -8/13\\ -12/13 \end{pmatrix}}^{2\cdot2}
\norm{\begin{pmatrix} 8/7 \\ -12/7\\ -18/7 \end{pmatrix}}^{2\cdot1} \\
&= 13^3 \cdot \left( \frac{ 133 }{ 13 } \right)^2 \cdot \frac{ 76 }{ 7 }
= 13 \cdot 7 \cdot 19^2 \cdot 76
= 2,496,676 \enspace.
\end{align*}
On the other hand, when $\matfont{B}$ is input to $1$-DeepLLL, $k = 2$ is incremented to $k = 3$ in the first \texttt{while} block (as $\| \vecfont{b}_2 \| = \| \vecfont{b}_1 \|$), and then $\mathrm{SWAP}_{1,3}$ is executed in the second \texttt{while} block (as $\| \vecfont{b}_3 \|^2 = 12 < 13 = \| \vecfont{b}_1 \|^2$).
The result is
\[
\sigma_{1,3}(\matfont{B}) =
\begin{pmatrix}
2 & 0 & -3 \\
-2 & 3 & -2 \\
-2 & -2 & 0
\end{pmatrix} \enspace,
\]
which is size-reduced.
By applying GSO, we obtain
\[
(\sigma_{1,3}(\matfont{B}))^\ast =
\begin{pmatrix}
2 & 1/3 & -5/2 \\
-2 & 8/3 & -1 \\
-2 & -7/3 & -3/2
\end{pmatrix} \enspace.
\]
Hence we have
\begin{align*}
\Pot(\sigma_{1,3}(\matfont{B})) &=
\norm{\begin{pmatrix} 2 \\ -2\\ -2 \end{pmatrix}}^{2\cdot3}
\norm{\begin{pmatrix} 1/3 \\ 8/3\\ -7/3 \end{pmatrix}}^{2\cdot2}
\norm{\begin{pmatrix} -5/2 \\ -1\\ -3/2 \end{pmatrix}}^{2\cdot1} \\
&= 12^3 \cdot \left( \frac{ 38 }{ 3 } \right)^2 \cdot \frac{ 19 }{ 2 }
= 3 \cdot 4^3 \cdot 2 \cdot 19^2 \cdot 19
= 2,633,856 > \Pot(\matfont{B}) \enspace,
\end{align*}
which shows that the value $\Pot(\matfont{B})$ is not monotonically decreasing in $1$-DeepLLL.

Similarly, for $1$-$S^2$LLL, we set $n = 3$ and consider the following size-reduced basis
\[
\matfont{B} =
\begin{pmatrix}
3 & 1 & 1 \\
1 & -1 & -2 \\
-1 & 2 & -2
\end{pmatrix} \enspace.
\]
By applying GSO to $\matfont{B}$, we obtain
\[
\matfont{B}^\ast = \begin{pmatrix}
3 & 1 & 23/66 \\
1 & -1 & -161/66 \\
-1 & 2 & -46/33
\end{pmatrix} \enspace.
\]
Hence we have
\begin{align*}
\Pot(\matfont{B}) &= 
\norm{\begin{pmatrix} 3 \\ 1\\ -1 \end{pmatrix}}^{2\cdot3}
\norm{\begin{pmatrix} 1 \\ -1\\ 2 \end{pmatrix}}^{2\cdot2}
\norm{\begin{pmatrix} 23/66 \\ -161/66\\ -46/33 \end{pmatrix}}^{2\cdot1} \\
&= 11^3 \cdot 6^2 \cdot \frac{ 529 }{ 66 }
= 11^2 \cdot 6 \cdot 529
= 384,054 \enspace.
\end{align*}
On the other hand, when $\matfont{B}$ is input to $1$-$S^2$LLL, $k = 2$ is incremented to $k = 3$ in the first \texttt{while} block (as now $\langle \vecfont{b}_1, \vecfont{b}_2 \rangle = 0$ and $\mathrm{SWAP}_{1,2}$ does not change the values of $B_i$'s, therefore $S_{12} = \mss(\matfont{B}) - \mss(\sigma_{1,2}(\matfont{B})) = 0$).
For the next \texttt{while} block, we have
\[
\sigma_{1,3}(\matfont{B}) = \begin{pmatrix}
1 & 3 & 1 \\
-2 & 1 & -1 \\
-2 & -1 & 2
\end{pmatrix}
\,,\,
\sigma_{2,3}(\matfont{B}) = \begin{pmatrix}
3 & 1 & 1 \\ 
1 & -2 & -1 \\
-1 & -2 & 2
\end{pmatrix} \enspace,
\]
which are size-reduced, and
\[ 
(\sigma_{1,3}(\matfont{B}))^\ast =
\begin{pmatrix}
1 & 8/3 & 46/45 \\
-2 & 5/3 & -23/18 \\
-2 & -1/3 & 161/90
\end{pmatrix}
\,,\,
(\sigma_{2,3}(\matfont{B}))^\ast =
\begin{pmatrix}
3 & 2/11 & 46/45 \\
1 & -25/11 & -23/18 \\
-1 & -19/11 & 161/90
\end{pmatrix} \enspace.
\]
Then we have
\[
\mss(\sigma_{1,3}(\matfont{B})) = \norm{\begin{pmatrix} 1 \\ -2 \\ -2 \end{pmatrix}}^2 + \norm{\begin{pmatrix} 8/3 \\ 5/3 \\ -1/3 \end{pmatrix}}^2 + \norm{\begin{pmatrix} 46/45 \\ -23/18 \\ 161/90 \end{pmatrix}}^2 = \frac{ 2239 }{ 90 } \enspace,
\]
\[
\mss(\sigma_{2,3}(\matfont{B})) = \norm{\begin{pmatrix} 3 \\ 1 \\ -1 \end{pmatrix}}^2 + \norm{\begin{pmatrix} 2/11 \\ -25/11 \\ -19/11 \end{pmatrix}}^2 + \norm{\begin{pmatrix} 46/45 \\ -23/18 \\ 161/90 \end{pmatrix}}^2 = \frac{ 24809 }{ 990 } \enspace,
\]
and $S_{13} = \mss(\matfont{B}) - \mss(\sigma_{1,3}(\matfont{B})) > \mss(\matfont{B}) - \mss(\sigma_{2,3}(\matfont{B})) = S_{23}$.
Moreover, we have $S_{13} = \mss(\matfont{B}) - \mss(\sigma_{1,3}(\matfont{B})) = 68/495 > 0$, therefore $\mathrm{SWAP}_{1,3}$ is executed in the \texttt{while} block.
Now we have
\begin{align*}
\Pot(\sigma_{1,3}(\matfont{B})) &=
\norm{\begin{pmatrix} 1 \\ -2\\ -2 \end{pmatrix}}^{2\cdot3}
\norm{\begin{pmatrix} 8/3 \\ 5/3\\ -1/3 \end{pmatrix}}^{2\cdot2}
\norm{\begin{pmatrix} 46/45 \\ -23/18\\ 161/90 \end{pmatrix}}^{2\cdot1} \\
&= 9^3 \cdot 10^2 \cdot \frac{ 529 }{ 90 }
= 9^2 \cdot 10 \cdot 529
= 428,490 > \Pot(\matfont{B}) \enspace,
\end{align*}
which shows that the value $\Pot(\matfont{B})$ is not monotonically decreasing in $1$-$S^2$LLL.
\end{example}

\section{Experimental Comparison of the Cases $\delta < 1$ and $\delta = 1$}
\label{sec:experiment}

In this section, we describe our computer experiments to compare practical behaviors of LLL algorithm and its variants for two cases $\delta = 1$ and $\delta < 1$ (more precisely, $\delta = 0.99$ which is one of the popular parameter choices in cryptographic applications). 

We implemented the four algorithms dealt with in Section \ref{sec:halt_property} (LLL, DeepLLL, PotLLL, $S^2$LLL) and also variants of DeepLLL algorithm where the \texttt{for} loop is executed only for the range $k - 5 \leq i \leq k - 1$ (denoted here by \lq\lq Deep-$5$\rq\rq) or $k - 10 \leq i \leq k - 1$ (denoted here by \lq\lq Deep-$10$\rq\rq).
The machine environment is: Windows WSL 2, Intel(R) Core(TM) i7-8650 CPU @ 1.90 GHz 2.11 GHz, 16 GB main memory.
Instead of the real execution times, we counted the numbers of exchanges $\mathrm{SWAP}_{*,*}$ of basis vectors, which is an implementation-independent quantity commonly available for any of these algorithms and is intuitively the dominant part among the whole algorithm.
The dimensions of the lattices were set to be $n = 10, 15, 20, 25, 30, 35, 40$.
We dealt with two types of input basis.
One is obtained by taking the projection of a problem instance of Darmstadt SVP Challenge \cite{SVPChallenge}, that is, given an input basis $\matfont{B}$ of dimension $40$ from the challenge (which is the minimum available dimension), its upper left $n \times n$ submatrix is used as our input basis, denoted here by $\matfont{B}_1$.
Note that the original $\matfont{B}$ is an upper triangular matrix, so is our input $\matfont{B}_1$.
Note also that an instance of SVP Challenge is generated by specifying a random seed; we used seeds $0, 1, 2, 3, 4$ to obtain five original bases $\matfont{B}$ above.
The other, denoted here by $\matfont{B}_2$, is obtained by randomly generating a lattice with $\vol(L) = 1$.
More precisely, $\matfont{B}_2$ is generated by starting from the identity matrix and iterating, $1000$ times in total, addition of $\pm 1$ times some column to some other column, where the two columns and the sign $\pm 1$ at each of the $1000$ steps are randomly selected.
Hence $\matfont{B}_2$ is an integer matrix.
We also used five inputs for each dimension, specified by seeds from $0$ to $4$.

Table \ref{tab:ratio_for_different_delta} shows the maximum values, among the five inputs for each algorithm, dimension, and input type, of the ratios of numbers of basis exchanges for the case $\delta = 1$ relative to the case $\delta = 0.99$ with the same input; \lq\lq N/A\rq\rq{} means that the algorithm did not halt within our experiment time.
The exact results of the experiments are shown in Tables \ref{tbl:matrix-data_dim10-15}, \ref{tbl:matrix-data_dim20-25}, \ref{tbl:matrix-data_dim30-35}, and \ref{tbl:matrix-data_dim40} at the appendix.
For the five algorithms other than $S^2$LLL, the ratios calculated in our experiments are fairly small (smaller than $3$, and mostly close to $1$) and seem to be not drastically growing as $n$ increases.
For the case of $S^2$LLL, the ratio grows more rapidly especially for the input type $\matfont{B}_2$, but the growth seems to be linear or at most quadratic in $n$.
These results suggest that, contrary to the significant difference of our hyperexponentially growing upper bound for the case $\delta = 1$ from the known polynomial upper bound (except for DeepLLL) for the case $\delta < 1$, the computational complexity for the two cases $\delta = 1$ and $\delta < 1$ in practice may be closer and accordingly, there might be hope to improve the upper bound for the case $\delta = 1$ much further.

\begin{table}[t!]
\centering
\caption{Maximum ratios of numbers of basis exchanges for the case $\delta = 1$ relative to the case $\delta = 0.99$ with the same input; here \lq\lq N/A\rq\rq{} means that the algorithm did not halt within our experiment time}
\label{tab:ratio_for_different_delta}
\begin{tabular}{cc|ccccccc}
& & \multicolumn{7}{c}{Dimension $n$} \\ \cline{3-9}
Algorithm & Input & $10$ & $15$ & $20$ & $25$ & $30$ & $35$ & $40$ \\ \hline
\multirow{2}{*}{LLL}        & $\matfont{B}_1$ & $1.025$ & $1.037$ & $1.066$ & $1.077$ & $1.106$ & $1.102$ & $1.097$ \\
                                 & $\matfont{B}_2$ & $1.035$ & $1.050$ & $1.083$ & $1.115$ & $1.065$ & $1.114$ & $1.100$ \\ \hline
\multirow{2}{*}{Deep}      & $\matfont{B}_1$ & $1.043$ & $1.096$ & $1.222$ & $1.159$ & $1.237$ & N/A & N/A \\
                                 & $\matfont{B}_2$ & $1.077$ & $1.131$ & $1.241$ & $1.331$ & $1.450$ & $1.473$ & N/A \\ \hline
\multirow{2}{*}{Deep-5}   & $\matfont{B}_1$ & $1.055$ & $1.101$ & $1.126$ & $1.126$ & $1.126$ & $1.126$ & $1.126$ \\
                                 & $\matfont{B}_2$ & $1.069$ & $1.299$ & $1.223$ & $2.193$ & $1.385$ & $2.544$ & $1.144$ \\ \hline
\multirow{2}{*}{Deep-10} & $\matfont{B}_1$ & $1.043$ & $1.088$ & $1.095$ & $1.299$ & $1.299$ & $1.299$ & $1.299$ \\
                                 & $\matfont{B}_2$ & $1.077$ & $1.447$ & $1.564$ & $2.033$ & $1.689$ & $1.169$ & $1.498$ \\ \hline
\multirow{2}{*}{Pot}        & $\matfont{B}_1$ & $1.042$ & $1.038$ & $1.063$ & $1.070$ & $1.084$ & $1.078$ & $1.098$ \\
                                 & $\matfont{B}_2$ & $1.041$ & $1.042$ & $1.080$ & $1.133$ & $1.058$ & $1.080$ & $1.093$ \\ \hline
\multirow{2}{*}{$S^2$}     & $\matfont{B}_1$ & $1.604$ & $2.346$ & $3.256$ & $4.263$ & $5.373$ & $6.479$ & $7.629$ \\
                                 & $\matfont{B}_2$ & $2.075$ & $3.375$ & $4.976$ & $6.872$ & $8.882$ & $11.381$ & $13.228$ \\ \hline
\end{tabular}
\end{table}

\paragraph{Acknowledgements.}

This research was partially supported by the Ministry of Internal Affairs and
Communications SCOPE Grant Number 182103105 and by JSPS KAKENHI Grant Number JP19H01804.

\appendix

\section*{Appendix}

Here we include (as mentioned in Section \ref{sec:experiment}) the tables showing the detailed numbers of exchanges for basis vectors in LLL algorithm and its variants with two choices of parameters $\delta = 0.99$ and $\delta = 1$.
We note that for some inputs for Deep-5 and Deep-10, the number of basis exchanges for the case $\delta = 1$ is even smaller than that for the case $\delta = 0.99$ as opposed to the intuition.
This would be explained by noting that for DeepLLL algorithm and its variants (i.e., Deep-5 and Deep-10), the running time depends not only on the number of basis exchanges but also on the distribution of the index $i$ in \texttt{for} loop at which the basis exchange occurred, the latter being different in general for the two cases $\delta = 0.99$ and $\delta = 1$.

\begin{table}[p]
\centering
\caption{Numbers of exchanges for basis vectors in LLL and its variants with $\delta = 0.99$ (upper rows) and $\delta = 1$ (lower rows) for dimensions $n \in \{10,15\}$; here \lq\lq Deep\rq\rq, \lq\lq Pot\rq\rq, and \lq\lq $S^2$\rq\rq{} stand for DeepLLL, PotLLL, and $S^2$LLL, respectively}
\label{tbl:matrix-data_dim10-15}
\begin{tabular}{ccc|c|rrrrrr}
 & & & & \multicolumn{6}{c}{\# of Exchanges} \\
$n$ & Input & Seed & $M$ & LLL & Deep & Deep-5 & Deep-10 & Pot & $S^2$ \\ \hline
\multirow{20}{*}{10} & \multirow{10}{*}{$\matfont{B}_1$} & \multirow{2}{*}{0} & \multirow{2}{*}{$2.12\times10^{120}$} & 1862 & 717 & 807 & 717 & 706 & 484 \\
& & & & 1901 & 742 & 851 & 742 & 726 & 720 \\ \cline{3-10}
 & & \multirow{2}{*}{1} & \multirow{2}{*}{$1.41\times10^{120}$} & 2018 & 778 & 853 & 778 & 766 & 485 \\
& & & & 2046 & 801 & 865 & 801 & 785 & 778 \\ \cline{3-10}
&  & \multirow{2}{*}{2} & \multirow{2}{*}{$1.86\times10^{120}$} & 2036 & 771 & 854 & 771 & 743 & 488 \\
& & & & 2066 & 790 & 882 & 790 & 761 & 748 \\ \cline{3-10}
 & & \multirow{2}{*}{3} & \multirow{2}{*}{$1.89\times10^{120}$} & 2052 & 774 & 849 & 774 & 757 & 480 \\
& & & & 2087 & 787 & 858 & 787 & 767 & 760 \\ \cline{3-10}
 & & \multirow{2}{*}{4} & \multirow{2}{*}{$2.07\times10^{120}$} & 1998 & 767 & 870 & 767 & 742 & 482 \\
& & & & 2047 & 800 & 905 & 800 & 773 & 762 \\ \cline{2-10}
& \multirow{10}{*}{$\matfont{B}_2$} & \multirow{2}{*}{0} & \multirow{2}{*}{$3.46\times10^{18}$} & 1508 & 486 & 579 & 486 & 485 & 240 \\
 & & & & 1556 & 517 & 601 & 517 & 504 & 498 \\ \cline{3-10}
 &  & \multirow{2}{*}{1} & \multirow{2}{*}{$1.10\times10^{17}$} & 1520 & 507 & 589 & 507 & 502 & 251 \\
 & & & & 1559 & 540 & 606 & 540 & 513 & 499 \\ \cline{3-10}
 & & \multirow{2}{*}{2} & \multirow{2}{*}{$1.33\times10^{16}$} & 1357 & 457 & 522 & 457 & 419 & 215 \\
 & & & & 1383 & 470 & 558 & 470 & 436 & 428 \\ \cline{3-10}
 & & \multirow{2}{*}{3} & \multirow{2}{*}{$8.59\times10^{15}$} & 1325 & 432 & 517 & 432 & 428 & 237 \\
 & & & & 1371 & 451 & 529 & 451 & 433 & 432 \\ \cline{3-10}
 & & \multirow{2}{*}{4} & \multirow{2}{*}{$2.16\times10^{18}$} & 1583 & 521 & 610 & 521 & 508 & 253 \\
 & & & & 1627 & 561 & 641 & 561 & 516 & 514 \\ \hline
\multirow{20}{*}{15} & \multirow{10}{*}{$\matfont{B}_1$} & \multirow{2}{*}{0} & \multirow{2}{*}{$2.12\times10^{120}$} & 4465 & 1774 & 1337 & 1805 & 1570 & 710  \\
 & & & & 4623 & 1883 & 1384 & 1964 & 1611 & 1598 \\ \cline{3-10}
 & & \multirow{2}{*}{1} & \multirow{2}{*}{$1.41\times10^{120}$} & 4460 & 1691 & 1145 & 1514 & 1555 & 696  \\
 & & & & 4600 & 1853 & 1163 & 1607 & 1612 & 1564 \\ \cline{3-10}
 & & \multirow{2}{*}{2} & \multirow{2}{*}{$1.86\times10^{120}$} & 4489 & 1765 & 1338 & 1839 & 1556 & 686  \\
 & & & & 4553 & 1832 & 1185 & 1860 & 1597 & 1539 \\ \cline{3-10}
 & & \multirow{2}{*}{3} & \multirow{2}{*}{$1.89\times10^{120}$} & 4602 & 1888 & 1314 & 1495 & 1608 & 696  \\
 & & & & 4748 & 1998 & 1338 & 1379 & 1650 & 1633 \\ \cline{3-10}
 & & \multirow{2}{*}{4} & \multirow{2}{*}{$2.07\times10^{120}$} & 4624 & 1891 & 1340 & 1598 & 1600 & 704  \\
 & & & & 4794 & 2066 & 1475 & 1712 & 1661 & 1588 \\ \cline{2-10}
& \multirow{10}{*}{$\matfont{B}_2$} & \multirow{2}{*}{0} & \multirow{2}{*}{$3.04\times10^{13}$} & 3241 & 1270 & 471 & 1241 & 1062 & 341  \\
 & & & & 3402 & 1344 & 612 & 1307 & 1107 & 1060 \\ \cline{3-10}
 & & \multirow{2}{*}{1} & \multirow{2}{*}{$1.03\times10^{12}$} & 3385 & 1428 & 674 & 1304 & 1162 & 345  \\
 & & & & 3473 & 1520 & 518 & 1441 & 1176 & 1154 \\ \cline{3-10}
 & & \multirow{2}{*}{2} & \multirow{2}{*}{$1.59\times10^{13}$} & 3436 & 1322 & 547 & 1305 & 1120 & 328  \\
 & & & & 3609 & 1495 & 543 & 1386 & 1166 & 1107 \\ \cline{3-10}
 & & \multirow{2}{*}{3} & \multirow{2}{*}{$2.58\times10^{12}$} & 3786 & 1473 & 687 & 1487 & 1257 & 360  \\
 & & & & 3948 & 1605 & 735 & 1239 & 1299 & 1198 \\ \cline{3-10}
 & & \multirow{2}{*}{4} & \multirow{2}{*}{$2.58\times10^{12}$} & 3195 & 1246 & 688 & 804  & 1027 & 312  \\
 & & & & 3282 & 1272 & 622 & 1163 & 1061 & 1029 \\ \hline
\end{tabular}
\end{table}
\begin{table}[p]
\centering
\caption{Numbers of exchanges for basis vectors in LLL and its variants with $\delta = 0.99$ (upper rows) and $\delta = 1$ (lower rows) for dimensions $n \in \{20,25\}$; here \lq\lq Deep\rq\rq, \lq\lq Pot\rq\rq, and \lq\lq $S^2$\rq\rq{} stand for DeepLLL, PotLLL, and $S^2$LLL, respectively}
\label{tbl:matrix-data_dim20-25}
\begin{tabular}{ccc|c|rrrrrr}
 & & & & \multicolumn{6}{c}{\# of Exchanges} \\
$n$ & Input & Seed & $M$ & LLL & Deep & Deep-5 & Deep-10 & Pot & $S^2$ \\ \hline
\multirow{20}{*}{20} & \multirow{10}{*}{$\matfont{B}_1$} & \multirow{2}{*}{0} & \multirow{2}{*}{$2.12\times10^{120}$} & 7995 & 3788 & 1345 & 2287 & 2860 & 903  \\
 & & & & 8149 & 4202 & 1515 & 2414 & 3009 & 2830 \\ \cline{3-10}
 & & \multirow{2}{*}{1} & \multirow{2}{*}{$1.41\times10^{120}$} & 7952 & 3736 & 1371 & 2147 & 2868 & 889  \\
 & & & & 8476 & 4564 & 1352 & 2352 & 3032 & 2841 \\ \cline{3-10}
 & & \multirow{2}{*}{2} & \multirow{2}{*}{$1.86\times10^{120}$} & 7948 & 4066 & 1367 & 2429 & 2896 & 885  \\
 & & & & 8310 & 4597 & 1185 & 1942 & 3049 & 2831 \\ \cline{3-10}
 & & \multirow{2}{*}{3} & \multirow{2}{*}{$1.89\times10^{120}$} & 7892 & 3724 & 1314 & 2433 & 2826 & 871  \\
 & & & & 8223 & 4146 & 1338 & 2469 & 2948 & 2836 \\ \cline{3-10}
 & & \multirow{2}{*}{4} & \multirow{2}{*}{$2.07\times10^{120}$} & 8188 & 4437 & 1340 & 2567 & 2976 & 892  \\
 & & & & 8556 & 4841 & 1475 & 2258 & 3164 & 2903 \\ \cline{2-10}
& \multirow{10}{*}{$\matfont{B}_2$} & \multirow{2}{*}{0} & \multirow{2}{*}{$3.31\times10^{9}$} & 5869 & 3245 & 428 & 956  & 2095 & 423  \\
 & & & & 6356 & 4027 & 454 & 1495 & 2226 & 2044 \\ \cline{3-10}
 & & \multirow{2}{*}{1} & \multirow{2}{*}{$4.56\times10^{10}$} & 6178 & 3490 & 387 & 1435 & 2217 & 446  \\
 & & & & 6462 & 3959 & 408 & 1464 & 2300 & 2203 \\ \cline{3-10}
 & & \multirow{2}{*}{2} & \multirow{2}{*}{$2.32\times10^{10}$} & 6454 & 3659 & 452 & 1323 & 2361 & 452  \\
 & & & & 6831 & 4068 & 553 & 1588 & 2435 & 2249 \\ \cline{3-10}
 & & \multirow{2}{*}{3} & \multirow{2}{*}{$1.85\times10^{8}$}  & 5616 & 3127 & 377 & 1337 & 2001 & 396  \\
 & & & & 5948 & 3373 & 422 & 1250 & 2125 & 1933 \\ \cline{3-10}
 & & \multirow{2}{*}{4} & \multirow{2}{*}{$1.61\times10^{10}$} & 5956 & 3028 & 566 & 1219 & 1994 & 434  \\
 & & & & 6348 & 3296 & 489 & 1506 & 2153 & 2011 \\ \hline
\multirow{20}{*}{25} & \multirow{10}{*}{$\matfont{B}_1$} & \multirow{2}{*}{0} & \multirow{2}{*}{$2.12\times10^{120}$} & 12183 & 7995 & 1345 & 2623 & 4605 & 1077 \\
& & & & 12521 & 9036 & 1515 & 2414 & 4872 & 4591 \\ \cline{3-10}
 & & \multirow{2}{*}{1} & \multirow{2}{*}{$1.41\times10^{120}$} & 12140 & 8135 & 1371 & 2147 & 4574 & 1061 \\
& & & & 13078 & 9430 & 1352 & 2789 & 4881 & 4476 \\ \cline{3-10}
 & & \multirow{2}{*}{2} & \multirow{2}{*}{$1.86\times10^{120}$} & 11828 & 7673 & 1367 & 2429 & 4394 & 1040 \\
& & & & 12626 & 8587 & 1185 & 1942 & 4700 & 4215 \\ \cline{3-10}
 & & \multirow{2}{*}{3} & \multirow{2}{*}{$1.89\times10^{120}$} & 11950 & 7738 & 1314 & 2524 & 4405 & 1046 \\
& & & & 12471 & 8492 & 1338 & 2469 & 4556 & 4366 \\ \cline{3-10}
 & & \multirow{2}{*}{4} & \multirow{2}{*}{$2.07\times10^{120}$} & 12074 & 8587 & 1340 & 2567 & 4555 & 1072 \\
& & & & 12826 & 9731 & 1475 & 2258 & 4876 & 4535 \\ \cline{2-10}
& \multirow{10}{*}{$\matfont{B}_2$} & \multirow{2}{*}{0} & \multirow{2}{*}{$9.54\times10^{7}$} & 8686 & 6555 & 388 & 666  & 3294 & 494  \\
 & & & & 9174 & 7998 & 213 & 705  & 3460 & 3264 \\ \cline{3-10}
 & & \multirow{2}{*}{1} & \multirow{2}{*}{$1.28\times10^{8}$} & 8527 & 6922 & 311 & 1085 & 3444 & 483  \\
 & & & & 9508 & 7813 & 315 & 764  & 3570 & 3319 \\ \cline{3-10}
 & & \multirow{2}{*}{2} & \multirow{2}{*}{$4.32\times10^{8}$} & 8802 & 6693 & 176 & 963  & 3218 & 484  \\
 & & & & 9541 & 8259 & 386 & 1191 & 3470 & 3178 \\ \cline{3-10}
 & & \multirow{2}{*}{3} & \multirow{2}{*}{$6.48\times10^{7}$} & 8779 & 6518 & 318 & 861  & 3362 & 477  \\
 & & & & 8942 & 8674 & 326 & 850  & 3525 & 3235 \\ \cline{3-10}
 & & \multirow{2}{*}{4} & \multirow{2}{*}{$2.23\times10^{8}$} & 8610 & 7051 & 269 & 607  & 3252 & 480  \\
 & & & & 9329 & 8608 & 240 & 1234 & 3686 & 3298 \\ \hline
\end{tabular}
\end{table}
\begin{table}[p]
\centering
\caption{Numbers of exchanges for basis vectors in LLL and its variants with $\delta = 0.99$ (upper rows) and $\delta = 1$ (lower rows) for dimensions $n \in \{30,35\}$; here \lq\lq Deep\rq\rq, \lq\lq Pot\rq\rq, and \lq\lq $S^2$\rq\rq{} stand for DeepLLL, PotLLL, and $S^2$LLL, respectively; and \lq\lq N/A\rq\rq{} means that the algorithm did not halt within our experiment time}
\label{tbl:matrix-data_dim30-35}
\begin{tabular}{ccc|c|rrrrrr}
 & & & & \multicolumn{6}{c}{\# of Exchanges} \\
$n$ & Input & Seed & $M$ & LLL & Deep & Deep-5 & Deep-10 & Pot & $S^2$ \\ \hline
\multirow{20}{*}{30} & \multirow{10}{*}{$\matfont{B}_1$} & \multirow{2}{*}{0} & \multirow{2}{*}{$2.12\times10^{120}$} & 16649 & 14587 & 1345 & 2623 & 6640 & 1218 \\
& & & & 17321 & 17425 & 1515 & 2414 & 6933 & 6439 \\ \cline{3-10}
 & & \multirow{2}{*}{1} & \multirow{2}{*}{$1.41\times10^{120}$} & 16392 & 14938 & 1371 & 2147 & 6751 & 1212 \\
& & & & 18124 & 18261 & 1352 & 2789 & 7138 & 6512 \\ \cline{3-10}
 & & \multirow{2}{*}{2} & \multirow{2}{*}{$1.86\times10^{120}$} & 16152 & 14273 & 1367 & 2429 & 6473 & 1183 \\
& & & & 17663 & 17661 & 1185 & 1942 & 6977 & 6194 \\ \cline{3-10}
 & & \multirow{2}{*}{3} & \multirow{2}{*}{$1.89\times10^{120}$} & 16436 & 14535 & 1314 & 2524 & 6418 & 1184 \\
& & & & 17332 & 17865 & 1338 & 2469 & 6772 & 6284 \\ \cline{3-10}
 & & \multirow{2}{*}{4} & \multirow{2}{*}{$2.07\times10^{120}$} & 16313 & 14898 & 1340 & 2567 & 6570 & 1226 \\
& & & & 17679 & 17482 & 1475 & 2258 & 7124 & 6398 \\ \cline{2-10}
& \multirow{10}{*}{$\matfont{B}_2$} & \multirow{2}{*}{0} & \multirow{2}{*}{$9.53\times10^{6}$} & 11442 & 11944 & 177 & 514 & 4727 & 534  \\
 & & & & 12184 & 15218 & 171 & 868 & 4999 & 4540 \\ \cline{3-10}
 & & \multirow{2}{*}{1} & \multirow{2}{*}{$1.13\times10^{7}$} & 11268 & 12261 & 286 & 593 & 4756 & 503  \\
 & & & & 11868 & 15072 & 147 & 828 & 4897 & 4231 \\ \cline{3-10}
 & & \multirow{2}{*}{2} & \multirow{2}{*}{$1.26\times10^{7}$} & 11930 & 12651 & 256 & 820 & 5046 & 544  \\
 & & & & 12560 & 15470 & 217 & 568 & 5324 & 4832 \\ \cline{3-10}
 & & \multirow{2}{*}{3} & \multirow{2}{*}{$1.26\times10^{7}$} & 11938 & 11441 & 208 & 592 & 4877 & 549  \\
 & & & & 12236 & 16589 & 288 & 817 & 5143 & 4610 \\ \cline{3-10}
 & & \multirow{2}{*}{4} & \multirow{2}{*}{$3.47\times10^{6}$} & 11371 & 12579 & 199 & 662 & 4996 & 525  \\
 & & & & 12012 & 15400 & 196 & 780 & 5118 & 4638 \\ \hline
\multirow{20}{*}{35} & \multirow{10}{*}{$\matfont{B}_1$} & \multirow{2}{*}{0} & \multirow{2}{*}{$2.12\times10^{120}$} & 21221 & N/A & 1345 & 2623 & 8897 & 1340 \\
 & & & & 22548 & N/A & 1515 & 2414 & 9487 & 8569 \\ \cline{3-10}
 & & \multirow{2}{*}{1} & \multirow{2}{*}{$1.41\times10^{120}$} & 21432 & N/A & 1371 & 2147 & 9191 & 1333 \\
 & & & & 23010 & N/A & 1352 & 2789 & 9695 & 8548 \\ \cline{3-10}
 & & \multirow{2}{*}{2} & \multirow{2}{*}{$1.86\times10^{120}$} & 20835 & N/A & 1367 & 2429 & 8797 & 1316 \\
 & & & & 22882 & N/A & 1185 & 1942 & 9486 & 8274 \\ \cline{3-10}
 & & \multirow{2}{*}{3} & \multirow{2}{*}{$1.89\times10^{120}$} & 21096 & N/A & 1314 & 2524 & 8741 & 1314 \\
 & & & & 22332 & N/A & 1338 & 2469 & 9213 & 8513 \\ \cline{3-10}
 & & \multirow{2}{*}{4} & \multirow{2}{*}{$2.07\times10^{120}$} & 20816 & N/A & 1340 & 2567 & 9031 & 1353 \\
 & & & & 22940 & N/A & 1475 & 2258 & 9623 & 8593 \\ \cline{2-10}
& \multirow{10}{*}{$\matfont{B}_2$} & \multirow{2}{*}{0} & \multirow{2}{*}{$6.70\times10^{6}$} & 15685 & 24239 & 135 & 426 & 6954 & 582  \\
 & & & & 16961 & 33381 & 214 & 498 & 7512 & 6624 \\ \cline{3-10}
 & & \multirow{2}{*}{1} & \multirow{2}{*}{$1.38\times10^{6}$} & 14205 & 22975 & 131 & 554 & 6550 & 560  \\
 & & & & 15825 & 31369 & 195 & 567 & 6863 & 5896 \\ \cline{3-10}
 & & \multirow{2}{*}{2} & \multirow{2}{*}{$1.81\times10^{6}$} & 13789 & 21574 & 165 & 612 & 6018 & 530  \\
 & & & & 15055 & 25067 & 179 & 442 & 6480 & 5888 \\ \cline{3-10}
 & & \multirow{2}{*}{3} & \multirow{2}{*}{$1.88\times10^{6}$} & 15201 & 21214 & 206 & 489 & 6652 & 550  \\
 & & & & 15393 & 31244 & 188 & 551 & 6984 & 6259 \\ \cline{3-10}
 & & \multirow{2}{*}{4} & \multirow{2}{*}{$1.38\times10^{6}$} & 14941 & 22677 & 136 & 545 & 6610 & 579  \\
 & & & & 15645 & 29246 & 346 & 614 & 6899 & 6299 \\ \hline
\end{tabular}
\end{table}
\begin{table}[p]
\centering
\caption{Numbers of exchanges for basis vectors in LLL and its variants with $\delta = 0.99$ (upper rows) and $\delta = 1$ (lower rows) for dimension $n = 40$; here \lq\lq Deep\rq\rq, \lq\lq Pot\rq\rq, and \lq\lq $S^2$\rq\rq{} stand for DeepLLL, PotLLL, and $S^2$LLL, respectively; and \lq\lq N/A\rq\rq{} means that the algorithm did not halt within our experiment time}
\label{tbl:matrix-data_dim40}
\begin{tabular}{ccc|c|rrrrrr}
 & & & & \multicolumn{6}{c}{\# of Exchanges} \\
$n$ & Input & Seed & $M$ & LLL & Deep & Deep-5 & Deep-10 & Pot & $S^2$ \\ \hline
\multirow{20}{*}{40} & \multirow{10}{*}{$\matfont{B}_1$} & \multirow{2}{*}{0} & \multirow{2}{*}{$2.12\times10^{120}$} & 26073 & N/A & 1345 & 2623 & 11530 & 1449  \\
 & & & & 27687 & N/A & 1515 & 2414 & 12232 & 11010 \\ \cline{3-10}
 & & \multirow{2}{*}{1} & \multirow{2}{*}{$1.41\times10^{120}$} & 26116 & N/A & 1371 & 2147 & 11690 & 1447  \\
 & & & & 27728 & N/A & 1352 & 2789 & 12246 & 10923 \\ \cline{3-10}
 & & \multirow{2}{*}{2} & \multirow{2}{*}{$1.86\times10^{120}$} & 25890 & N/A & 1367 & 2429 & 11119 & 1421  \\
 & & & & 28086 & N/A & 1185 & 1942 & 12213 & 10471 \\ \cline{3-10}
 & & \multirow{2}{*}{3} & \multirow{2}{*}{$1.89\times10^{120}$} & 25780 & N/A & 1314 & 2524 & 11321 & 1419  \\
 & & & & 27344 & N/A & 1338 & 2469 & 11964 & 10825 \\ \cline{3-10}
 & & \multirow{2}{*}{4} & \multirow{2}{*}{$2.07\times10^{120}$} & 25525 & N/A & 1340 & 2567 & 11608 & 1465  \\
 & & & & 28013 & N/A & 1475 & 2258 & 12277 & 10940 \\ \cline{2-10}
& \multirow{10}{*}{$\matfont{B}_2$} & \multirow{2}{*}{0} & \multirow{2}{*}{$1.32\times10^{5}$} & 17063 & N/A & 195 & 387 & 7869 & 572  \\
 & & & & 18537 & N/A & 223 & 407 & 8372 & 7513 \\ \cline{3-10}
 & & \multirow{2}{*}{1} & \multirow{2}{*}{$1.96\times10^{5}$} & 17715 & N/A & 146 & 317 & 7666 & 600  \\
 & & & & 18595 & N/A & 162 & 365 & 8031 & 7530 \\ \cline{3-10}
 & & \multirow{2}{*}{2} & \multirow{2}{*}{$2.70\times10^{5}$} & 17799 & N/A & 139 & 424 & 8542 & 610  \\
 & & & & 18434 & N/A & 138 & 492 & 8958 & 7731 \\ \cline{3-10}
 & & \multirow{2}{*}{3} & \multirow{2}{*}{$1.34\times10^{5}$} & 17886 & N/A & 188 & 297 & 8274 & 592  \\
 & & & & 19561 & N/A & 143 & 445 & 9040 & 7831 \\ \cline{3-10}
 & & \multirow{2}{*}{4} & \multirow{2}{*}{$2.10\times10^{5}$} & 15726 & N/A & 211 & 397 & 7592 & 540  \\
 & & & & 17298 & N/A & 217 & 450 & 7881 & 7028 \\ \hline
\end{tabular}
\end{table}

\begin{thebibliography}{99}

\bibitem{Ajtai-Dwork}
Ajtai, M., Dwork, C., \lq\lq A public-key cryptosystem with worst-case/average-case equivalence.\rq\rq{} In: \emph{Proceedings of STOC 1997}, pp.284--293, ACM, 1997.

\bibitem{akh: 1lll}
Akhavi, A., \lq\lq The optimal LLL algorithm is still polynomial in fixed dimension.\rq\rq{} \emph{Theoretical Computer Science}, vol.297, pp.3--23, 2003.

\bibitem{SVPChallenge}
SVP Challenge. \url{https://www.latticechallenge.org/svp-challenge/}.

\bibitem{fonschnwag: potlll}
Fontein, F., Schneider, M., Wagner, U., \lq\lq PotLLL: a polynomial time version of LLL with deep insertions.\rq\rq{} \emph{Designs, Codes and Cryptography}, vol.73, pp.355--368, 2014.

\bibitem{galbraith: size-reduce}
Galbraith, S.D., \emph{Mathematics of Public Key Cryptography}. Cambridge University Press, Cambridge, 2012.

\bibitem{ell1}
Koblitz, N., \lq\lq Elliptic curve cryptosystems.\rq\rq{} \emph{Mathematics of Computation}, vol.48, no.177, pp.203--209, 1987.

\bibitem{lll}
Lenstra, A.K., Lenstra, H.W., Lov\'{a}sz, L., \lq\lq Factoring polynomials with rational coefficients.\rq\rq{} \emph{Mathematische Annalen}, vol.261, no.4, pp.515--534, 1982.

\bibitem{ell2}
Miller, V.S., \lq\lq Use of elliptic curves in cryptography.\rq\rq{} In: Proceedings of CRYPTO 1985, \emph{LNCS} vol.218, pp.417--426, Springer,
1985.

\bibitem{nguyenval: lll}
Nguyen, Q., Vall\'{e}e, B., \emph{The LLL Algorithm.} Information Security Cryptography, 2010.

\bibitem{rsa}
Rivest, R.L., Shamir, A., Adleman, L.M., \lq\lq A method for obtaining
digital signatures and public-key cryptosystems.\rq\rq{} \emph{Communications of the ACM}, vol.21, no.2, pp.120--126, 1978.

\bibitem{deeplll}
Schnorr, C.P., \lq\lq Block reduced lattice bases and successive minima.\rq\rq{} \emph{Combinatorics, Probability and Computing}, vol.3, no.4, pp.507--522, 1994.

\bibitem{Shor}
Shor, P.W., \lq\lq Algorithms for quantum computation: discrete logarithms and factoring.\rq\rq{} In: \emph{Proceedings of FOCS 1994}, pp.124--134, IEEE, 1994.

\bibitem{yasyam: s2lll}
Yasuda, M., Yamaguchi, J., \lq\lq A new polynomial-time variant of LLL with deep insertions for decreasing the squared-sum of Gram-Schmidt lengths.\rq\rq{} \emph{Designs, Codes and Cryptography}, vol.87, pp.2489--2505, 2019.

\end{thebibliography}
\end{document}